\title{Admissible Tools in the Kitchen\\ of Intuitionistic Logic\\}
\author{Andrea Condoluci
\institute{Department of Computer Science and Engineering \\ Universit\`a di Bologna \\ Bologna, Italy}
\email{andrea.condoluci@unibo.it}
\and
Matteo Manighetti
\institute{INRIA Saclay \& LIX, \'Ecole Polytechnique \\ Palaiseau, France}
\email{mmanighe@lix.polytechnique.fr}
}
\theoremstyle{definition}
\newtheorem{theorem}{Theorem}[section]
\newtheorem{corollary}{Corollary}[section]
\newtheorem{lemma}{Lemma}[section]
\newtheorem{definition}{Definition}[section]
\newtheorem{fact}{Fact}[section]
\newcommand{\IPC}{\textbf{IPC}}
\newcommand{\CPC}{\textbf{CPC}}
\newcommand{\KP}{\textbf{KP}}
\newcommand{\HA}{\textbf{HA}}
\newcommand{\AD}{\textbf{AD}}
\newcommand{\subst}[2]{\{#2/#1\}}
\newcommand{\ih}{\emph{i.h.}}
\newcommand{\ie}{\emph{i.e.}}
\newcommand{\harr}{\mathtt{hop}}
\newcommand{\mmid}{\mathrel{{\mymid}\!\!\!{\mymid}}}
\newcommand{\mymid}{\mathrel{\normalfont\texttt{|}}}
\newcommand{\harrop}[3]{{\normalfont\harr\texttt{[}{#1} \mmid {#2}\mymid{#3}\texttt{]}}}
\newcommand{\Visser}[5]{{\normalfont\mathtt{V}_{#1}\texttt{[}{#2} \mmid {#3}\mymid{#4}\mmid #5\texttt{]}}}
\newcommand{\Pair}[2]{\texttt{<}\,#1\texttt{,}\, #2\texttt{>}}
\newcommand{\EFQ}[1]{\mathtt{efq}\,\,#1}
\newcommand{\Case}[3]{{\normalfont \texttt{case[}#1 \mmid #2 \mymid #3\texttt{]}}}
\newcommand{\In}[2]{\mathtt{inj}_{#1}\,#2}
\newcommand{\Proj}[1]{\mathtt{proj}_{#1}\,}
\newcommand{\Ctx}{W}
\newcommand{\ECtx}{E}
\newcommand{\Bind}[1]{#1\texttt{.}}
\newcommand{\Admissible}[2]{#1 \, \texttt{/} \, #2}
\newcommand{\dashKP}{\vdash_{\KP{}}}
\newcommand{\dashV}{\vdash_{\CalcV{}}}
\newcommand{\Ax}{ax}
\newcommand{\tm}{t}
\newcommand{\CtxHole}{\Box}
\newcommand{\tmtwo}{s}
\newcommand{\tmthree}{u}
\newcommand{\var}{x}
\newcommand{\vartwo}{y}
\newcommand{\varthree}{z}
\newcommand{\defeq}{:=}
\newcommand{\reflemma}[1]{Lemma~\ref{l:#1}}
\newcommand{\refthm}[1]{Theorem~\ref{thm:#1}}
\newcommand{\reffact}[1]{Fact~\ref{fact:#1}}
\newcommand{\reflemmap}[2]{Lemma~\ref{l:#1}(\ref{p:#1-#2})}
\newcommand{\refdef}[1]{Definition~\ref{def:#1}}
\newcommand{\refdefp}[2]{Definition~\ref{def:#1}(\ref{p:#1-#2})}
\newcommand{\sub}{\sigma}
\newcommand{\den}[1]{\llbracket{#1}\rrbracket}
\newcommand{\SN}{\text{SN}}
\newcommand{\type}{A}
\newcommand{\typetwo}{B}
\newcommand{\typethree}{C}
\newcommand{\typefour}{D}
\newcommand{\varset}{\mathcal{V}}
\newcommand{\tatom}{p}
\newcommand{\angled}[1]{\langle{#1}\rangle}
\newcommand{\SNe}{\text{N}\text{e}}
\newcommand\Copy[2]{
  \immediate\write\@auxout{\unexpanded{\global\long\@namedef{mytext@#1}{#2}}}%
  #2%
}
\newcommand\Paste[1]{%
  \ifcsname mytext@#1\endcsname
    \@nameuse{mytext@#1}%
  \else
    ``??''
  \fi
}
\newenvironment{customthm}[1]
  {\innercustomthm}
  {\endinnercustomthm}
\newcommand{\CopyLemma}[3]{%
 \begin{lemma}[\Copy{#2-label}{#1}]\label{l:#2}
  \Copy{#2-body}{#3}
 \end{lemma}%
}
\newcommand{\PasteLemma}[1]{%
 \begin{customthm}{\textbf{\reflemma{#1}}}[\Paste{#1-label}]
   \Paste{#1-body}
 \end{customthm}%
}
\newcommand{\CopyTheorem}[3]{%
 \begin{theorem}[\Copy{#2-label}{#1}]\label{thm:#2}
  \Copy{#2-body}{#3}
 \end{theorem}%
}
\newcommand{\PasteTheorem}[1]{%
 \begin{customthm}{\textbf{\refthm{#1}}}[\Paste{#1-label}]
   \Paste{#1-body}
 \end{customthm}%
}
\newcommand{\CalcV}{\textbf{V}}
\newcommand{\elctx}{K}
\newcommand{\toe}{\to_{\SN}}
\newcommand{\toes}{\toe^*}
\newcommand{\toenf}{\twoheadrightarrow_{\SN}}
\newcommand{\toKP}{\to_{\KP{}}}
\newcommand{\toV}{\to_{\CalcV{}}}
\newcommand{\toIPC}{\to_{\IPC{}}}
\newcommand{\mapstoKP}{\mapsto_{\KP{}}}
\newcommand{\mapstoIPC}{\mapsto_{\IPC{}}}
\newcommand{\mapstoV}{\mapsto_{\CalcV{}}}
\newcommand{\ntm}{n}
\newcommand{\Clos}[1]{\overleftarrow{#1}}
\newcommand{\dom}[1]{\operatorname{dom}(#1)}
\newcommand{\fv}[1]{\operatorname{fv}(#1)}
\newcommand{\eval}[1]{\operatorname{eval}(#1)}
\newcommand{\evali}[1]{\operatorname{eval}_{\IPC}(#1)}
\newcommand{\negneutral}{$\neg$neutral}
\newcommand{\toneutral}{$\to$neutral}
\newcommand{\VisserRule}[1]{Visser$_{#1}$}
\newcommand{\Enf}{$\toe$nf}
\begin{document}
\maketitle

\begin{abstract}
The usual reading of logical implication $\type\to\typetwo$ as `` \emph{if $\type$ then $\typetwo$} '' fails in intuitionistic logic: there are formulas $\type$ and $\typetwo$ such that $\type\to\typetwo$ is not provable, even though $\typetwo$ is provable whenever $\type$ is provable.
Intuitionistic rules apparently don't capture interesting meta-properties of the logic and, from a computational perspective, the programs corresponding to intuitionistic proofs are not powerful enough. 
Such non-provable implications are nevertheless \emph{admissible}, and we study their behaviour by means of a proof term assignment
and related rules of reduction.
We introduce \CalcV{}, a calculus that is able to represent admissible inferences, while remaining in the intuitionistic world by having normal forms that are just intuitionistic terms.
We then extend intuitionistic logic with principles corresponding to admissible rules.
As an example, we consider the Kreisel-Putnam logic \KP{}, for which we prove the strong normalization and the disjunction property through our term assignment.
This is our first step in understanding the essence of admissible rules for intuitionistic logic.
\end{abstract}

\section{Introduction}
Proof systems are usually presented inductively by giving axioms and rules of inference, which are respectively the ingredients and the tools for cooking new proofs. For example, when presenting \emph{classical propositional logic} (\CPC{}) in \emph{natural deduction}, for each of the usual connectives $\wedge, \vee, \neg, \to, \bot$ one gives a set of standard tools to introduce or remove a connective from a formula in order to obtain a proof.

In their most essential form, we can represent rules as an inference $\Admissible{\type_1,...,\type_n}\typetwo$ (read ``from $\type_1,...,\type_n$ infer $\typetwo$'') where $\type_1,\dots,\type_n,\typetwo$ are schemata of logic formulas. A rule $\Admissible{\type_1,\dots,\type_n}\typetwo$ is said to be \emph{admissible} in a proof system if it is in a way redundant, \ie{} whenever $\type_1\ldots\type_n$ are provable, then $\typetwo$ is already provable without using that rule.
Adding or dropping rules may increase or decrease the amount of proofs we can cook in a proof system. The effect can be dramatic: for example, \emph{classical propositional logic} \CPC{} can be obtained by simply adding the rule of \emph{double negation elimination} ($\Admissible{\neg\neg\type}{\type}$) to \emph{intuitionistic propositional logic} \IPC{}%
. Admissible rules are all the opposite: if we decide to utilize one in order to cook something, then we could have just used our ingredients in a different way to reach the same result.

One appealing feature of \CPC{} is the fact that it is \emph{structurally complete}: all its admissible rules are \emph{derivable}, in the sense that whenever $\Admissible{\type_1,\dots,\type_n}{\typetwo}$ is an admissible rule, then also the corresponding principle $\type_1 \land \dots \land \type_n \to \typetwo$ is provable \cite{harrop_disjunctions_1956} --  \ie{} the system acknowledges that there's no need for that additional tool, so we can internalize it and use the old tools to complete our reasoning. This is not the case in intuitionistic logic: the mere fact that we \emph{know} that the tool was not needed, doesn't give us any way to show inside the system \emph{why} is that.
On the other hand, \IPC{} has other wonderful features. Relevant here is the \emph{disjunction property}, fundamental for a constructive system: when a disjunction $\type\lor\typetwo$ is provable, then one of the disjuncts $\type$ or $\typetwo$ is provable as well.
Our interest is in these intuitionistic admissible rules that are not derivable, in the computational principles they describe, and in the logic systems obtained by explicitly adding such rules to \IPC{}.

Can one effectively identify all intuitionistic admissible rules? The question of whether that set of rules is recursively enumerable was posed by Friedman in 1975, and answered positively by Rybakov in 1984. It was then de Jongh and Visser who exhibited a numerable set of rules (now known as \emph{Visser's rules}) and conjectured that it formed a basis for all the admissible rules of \IPC{}. This conjecture was later proved by Iemhoff in the fundamental \cite{iemhoff_admissible_2001}. Rozi\`ere in his Ph.D. thesis \cite{roziere_admissible_1992} reached the same conclusion with a substantially different technique, independently of Visser and Iemhoff.
These works elegantly settled the problem of identifying and building admissible rules. However our question is different: \emph{why} are these rules superfluous, and what reduction steps can eliminate them from proofs?

Rozi\`ere first posed the question of finding a computational correspondence for his basis of the admissible rules in the conclusion of his thesis, but no work has been done on this ever since.
Natural deduction provides a powerful tool to analyse the computational behaviour of logical axioms, thanks to the fact that it gives a simple way to translate axioms into rules and to develop correspondences with $\lambda$-calculi.
Our plan is therefore to understand the phenomenon of admissibility by equipping proofs with $\lambda$-terms and associated reductions in the spirit of the Curry-Howard correspondence.
Normalization will show explicitly what role admissible rules play in a proof.

\subsection{Visser's Basis}
The central role in the developement of the paper is played by Visser's basis of rules. The term \emph{basis} means that any rule that is admissible for \IPC{} is obtainable by combining some of the rules of the family with other intuitionistic reasoning.
It consists of the following sequence of rules:

\[
\text{\VisserRule n}\colon \quad %
\Admissible {%
(\typetwo_i \to \typethree_i)_{i=1\dots n} \to \type_1 \lor \type_2 \,\,}{
\begin{cases}
  \bigvee_{j=1}^n ((\typetwo_i \to \typethree_i)_{i=1\dots n} \to \typetwo_j) \\
  \lor \\
  ((\typetwo_i \to \typethree_i)_{i=1\dots n} \to \type_1) \\
  \lor \\
  ((\typetwo_i \to \typethree_i)_{i=1\dots n} \to \type_2)
\end{cases} %
} \]

This is read as: for every natural number $n$, whenever the left part of the rule (a $n$-ary implication) is provable, then the right part (an $n+2$-ary disjunction) is provable.
It forms a basis in the sense that all other admissible rules of \IPC{} can be obtained from the combination of rules from this family with the usual rules of intuitionistic logic.
It is an infinite family, since \VisserRule {n+1} cannot be derived from \VisserRule 1,\ldots,\VisserRule n~ \cite{roziere_admissible_1992}.

The importance of Visser's basis is not limited to intuitionistic logic but also applies more generally to intermediate logics, as witnessed by the following:

\begin{theorem}[Iemhoff~\cite{iemhoff_intermediate_2005}]
  \label{thm:visser-basis}
  If the rules of Visser's basis are admissible in a logic, then they form a basis for the admissible rules of that logic
\end{theorem}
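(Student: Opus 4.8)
The plan is to retrace Iemhoff's argument for \IPC{} \cite{iemhoff_admissible_2001} and check that each ingredient survives the passage to a larger logic. Call a formula $\type$ \emph{projective} in a logic $L$ when some substitution $\sub$ satisfies $L\vdash\sub\type$ and $\type\vdash_L\sub\tatom\leftrightarrow\tatom$ for every variable $\tatom$; a \emph{projective approximation} of $\type$ is a finite set $\Pi$ of projective formulas, each $L$-entailing $\type$, with $\type\vdash_L\bigvee\Pi$. Write $\vdash_{L+\mathrm{V}}$ for derivability in $L$ together with Visser's rules. Two elementary facts come first. (1) \emph{Projectivity propagates upward}: if $\type$ is projective in \IPC{} then it is projective in every $L\supseteq\IPC{}$, since the witness $\sub$ and the equivalences $\type\vdash\sub\tatom\leftrightarrow\tatom$ remain valid in $L$. (2) \emph{On projective formulas, admissibility collapses to derivability}: if $\Admissible{\type}{\typetwo}$ is $L$-admissible and $\sub$ witnesses projectivity of $\type$, then $L\vdash\sub\type$ forces $L\vdash\sub\typetwo$, while $\type\vdash_L\sub\tatom\leftrightarrow\tatom$ lifts by a routine induction to $\type\vdash_L\sub\typetwo\leftrightarrow\typetwo$, whence $\type\vdash_L\typetwo$. (Many-premise rules $\Admissible{\type_1,\dots,\type_n}{\typetwo}$ are treated as $\Admissible{\type_1\land\dots\land\type_n}{\typetwo}$, since a substitution proves all the $\type_i$ exactly when it proves their conjunction.)

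The technical core — and the step I expect to be the real obstacle — is the \emph{projective approximation lemma}: for every formula $\type$ there is a projective approximation $\Pi_\type$ made of \IPC{}-projective formulas such that $\type\vdash_{\IPC+\mathrm{V}}\bigvee\Pi_\type$. This is exactly the content of Iemhoff's construction for \IPC{}: one characterizes projective formulas through the extension property on finite Kripke models, reads off from $\type$ a finite approximating family, and verifies that the accompanying derivation of $\bigvee\Pi_\type$ from $\type$ uses nothing but intuitionistic inferences and instances of the rules $\VisserRule n$. What matters here is that this construction is \emph{absolute} — it depends on $\type$ alone, never on $L$ — and that the derivation it yields is literally a derivation in $\IPC{}$ plus Visser's rules; hence, as soon as $L\supseteq\IPC{}$, the very same $\Pi_\type$ and the very same derivation give $\type\vdash_{L+\mathrm{V}}\bigvee\Pi_\type$, with every member of $\Pi_\type$ still projective in $L$ and still $L$-entailing $\type$ by fact (1) and $\IPC{}\subseteq L$.

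Granting the lemma, the theorem is assembled quickly. Let $\Admissible{\type}{\typetwo}$ be admissible in $L$ and take $\Pi_\type=\{\typethree_1,\dots,\typethree_k\}$, so $\type\vdash_{L+\mathrm{V}}\typethree_1\lor\dots\lor\typethree_k$. Fix $i$. From $\typethree_i\vdash_L\type$ we obtain: any $\sub$ with $L\vdash\sub\typethree_i$ also has $L\vdash\sub\type$, hence $L\vdash\sub\typetwo$ by admissibility of $\Admissible{\type}{\typetwo}$; so $\Admissible{\typethree_i}{\typetwo}$ is $L$-admissible. As $\typethree_i$ is projective in $L$, fact (2) gives $\typethree_i\vdash_L\typetwo$. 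Running this for every $i$ and reasoning by cases yields $\typethree_1\lor\dots\lor\typethree_k\vdash_L\typetwo$; composing with $\type\vdash_{L+\mathrm{V}}\typethree_1\lor\dots\lor\typethree_k$ produces $\type\vdash_{L+\mathrm{V}}\typetwo$. Thus every rule admissible in $L$ is derivable from Visser's basis over $L$; together with the theorem's hypothesis — which is exactly what guarantees the converse, namely that everything derived using Visser's rules over $L$ is still $L$-admissible — this is the assertion that Visser's rules form a basis for the admissible rules of $L$. All the difficulty is concentrated in the absoluteness and the Visser-only shape of the derivation supplied by the projective approximation lemma; the remainder is bookkeeping.
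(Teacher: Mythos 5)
There is no in-paper proof to compare against here: the paper imports this statement verbatim from Iemhoff \cite{iemhoff_intermediate_2005} and never argues it, so the only meaningful comparison is with Iemhoff's published argument --- and your reconstruction follows it faithfully. Your decomposition (projectivity transfers upward from \IPC{} to any $L\supseteq\IPC{}$; for a projective premise, admissibility collapses to derivability; every formula has a projective approximation whose disjunction is derivable from it in \IPC{} plus Visser's rules; then assemble, with the converse inclusion being exactly the hypothesis that Visser's rules are $L$-admissible plus closure of admissible consequence under substitution-closed rules) is precisely the route of the cited paper, and the assembly and your two ``elementary facts'' are carried out correctly. The only point to be honest about is the status of your ``projective approximation lemma'': it is not established by your argument, and it is the entire technical weight of the theorem --- the existence of projective approximations is Ghilardi's unification-theoretic result, and the derivability $\type\vdash_{\IPC{}+\mathrm{V}}\bigvee\Pi_\type$ is the main theorem of \cite{iemhoff_admissible_2001}. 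Since you correctly isolate the one property that makes the transfer to intermediate logics work (the approximation and its Visser-only derivation depend on $\type$ alone, not on $L$), the proposal is a sound reconstruction-by-citation rather than a new or gapped proof; just present that lemma explicitly as quoted prior work, as this paper itself does with the whole theorem.
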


This theorem also gives us a simple argument to prove the structural completeness of \CPC{}: since all the \VisserRule n rules are provable in \CPC{}, they are admissible and therefore they constitute a basis for \emph{all} the admissible rules of \CPC{}; but since the \VisserRule n are derivable, all admissible rules are derivable.
\subsection{Contributions and Structure of the Paper}
In Section~\ref{sect:terms} we introduce the natural deduction rules corresponding to Visser's rules, and present the associated $\lambda$-calculus $\CalcV{}$:
we show that proofs in the new calculus normalize to ordinary intuitionistic proofs.
In the remaining part of the paper, we push further our idea and start adapting our calculus to intermediate logics characterized by axioms derived from admissible rules.
In Section~\ref{sect:harrop} we study the well-known \emph{Harrop's rule}, and more precisely the logic \KP{}
obtained by adding Harrop's principle to \IPC{}: we prove good properties like subject reduction, the disjunction property, and strong normalization.
In Section~\ref{sect:conclusion} we quickly introduce the logic \AD{} (obtained by adding the axiom $V_1$ to \IPC{}) as a candidate for future study, and possible extensions to arithmetic.
Proofs can be found in the appendices at the end of the paper.

\section{Proof Terms for the Admissible Rules: \CalcV{}}\label{sect:terms}

In this section, we are going to assign proof terms to all the inferences of Visser's basis in a uniform way. First, we give
a natural deduction flavor to the Visser rules.
Since the conclusion of the left-hand side of the rules is a disjunction, we model the rules as generalized disjunction eliminations $\vee_E$; ``generalized'' because the main premise will be the disjunction in the antecedent of the \VisserRule n, but under $n$ \emph{implicative} assumptions. Therefore the rules of inference Visser$_n$ have the form:

\begin{prooftree}
  \AxiomC{$\![\typetwo_i \to \typethree_i]_{i=1\dots n}$} \noLine
  \UnaryInfC{\vdots} \noLine \UnaryInfC{$\type_1 \lor \type_2$}

  \AxiomC{$\![(\typetwo_i \to \typethree_i)_{i=1\dots n} \to \type_1]$}
  \noLine \UnaryInfC{\vdots} \noLine \UnaryInfC{$\typefour$}

  \AxiomC{$\![(\typetwo_i \to \typethree_i)_{i=1\dots n} \to \type_2]$}
  \noLine \UnaryInfC{\vdots} \noLine \UnaryInfC{$\typefour$}

  \AxiomC{$\![(\typetwo_i \to \typethree_i)_{i=1\dots n} \to \typetwo_j]_{j=1\dots n}$}
  \noLine \UnaryInfC{$\vdots\quad \quad \quad\quad \vdots$} \noLine \UnaryInfC{$\typefour \quad \cdots \quad \typefour$}
  \QuaternaryInfC{$\typefour$}
\end{prooftree}

In order to keep the rules admissible, we need to restrict the usage of the inference: the additional requirement is that the proofs of the \emph{main} premise (the one on the left with end-formula $\type_1 \lor \type_2$) must be \emph{closed} proofs, \ie{} cannot have open assumptions others than the ones discharged by that Visser inference. Otherwise we would be able to go beyond \IPC{}, since for example we would prove all the principles corresponding to the admissible rules (as in system \AD{}, see Section~\ref{sect:conclusion}).
On the other side, it is straightforward to see that our rules directly correspond to rules of Visser's basis, and that they adequately represent admissibility.
We now turn to proof terms:

\begin{center}
 \fbox{%
  \begin{minipage}{11cm}\vspace{-0.15in}
    $$\!\!\!\!\begin{array}{lrll}
   \tm, \tmtwo, \tmthree & ::=  & \var, \vartwo, \varthree, \ldots \in \varset \mid \tm\,\tmtwo \mid \lambda\var.\, \tm & \\
   & | & \EFQ \tm & \text{(exfalso)}\\
   & | & \Pair \tm \tmtwo & \text{(pair)}\\
   & | & \Proj i \tm & \text{(projection)}\\
   & | & \In i \tm & \text{(injection)}\\
   & | & \Case \tm {\Bind\vartwo\tmtwo_1} {\Bind\vartwo\tmtwo_2} & \text{(case)}\\
   \\
   & | & \Visser n {\Bind {\vec\var} \tm} {\Bind\vartwo \tmtwo_1} {\Bind\vartwo \tmtwo_2} {\Bind \varthree {\vec \tmthree} } & \text{(Visser --- in \CalcV{} and \AD{})}\\
   & | & \harrop {\Bind {\vec\var} \tm} {\Bind\vartwo \tmtwo_1} {\Bind\vartwo \tmtwo_2} & \text{(Harrop --- in \KP{})}\\
   \end{array}$$
  \end{minipage}}
 \captionof{figure}{Proof terms}
 \label{fig:terms-and-ctxs}
\end{center}

Since the shape of the rules is the elimination of a disjunction, the proof term associated with this inference will be modeled on the \emph{case analysis} $\Case - - -$.
The difference will be in the number of assumptions that are bound, and in the number of possible cases.
We use the vector notation $\Bind {\vec\var} \tm$ on variables to indicate that a sequence of (indexed) variables $\var_1,\ldots,\var_n$ is bound, and on terms like $\Bind\varthree {\vec\tmthree}$ to indicate a sequence of (indexed) terms $\tmthree_1,\ldots,\tmthree_n$ on each of which we are binding the variable $\varthree$.
The resulting annotation for a Visser inference is then:%
     \begin{center}
      \AxiomC{$\vec\var\colon (\typetwo_i \to \typethree_i)_{i=1\dots n} \vdash \tm\colon \type_1 \lor \type_2$}

      \AxiomC{$\Gamma, \vartwo\colon (\typetwo_i \to \typethree_i)_{i=1\dots n} \to \type_1 \vdash \tmtwo_1 : \typefour$}
      \noLine \UnaryInfC{$\Gamma, \vartwo\colon(\typetwo_i \to \typethree_i)_{i=1\dots n} \to \type_2 \vdash \tmtwo_2 : \typefour$}

      \noLine \UnaryInfC{$\{\Gamma, \varthree\colon(\typetwo_i \to \typethree_i)_{i=1\dots n} \to \typetwo_j \vdash \tmthree_j : \typefour\}_{j=1\dots n}$}
      \LeftLabel{Visser$_n$}
      \BinaryInfC{$\Gamma \vdash \Visser n {\Bind {\vec\var} \tm} {\Bind \vartwo {\tmtwo_1}} {\Bind \vartwo {\tmtwo_2}} {\Bind \varthree {\vec \tmthree} } : \typefour$}
      \DisplayProof
    \end{center}
    
    We call \CalcV{} the calculus obtained by adding this family of rules of inference to \IPC{}. The syntax of \CalcV{} can be found in Figure~\ref{fig:terms-and-ctxs}, and it includes the usual proof terms for intuitionistic logic \cite{sorensen_lectures_2006}, plus the proof terms $\Visser - - - - -$ for the Visser family.

We now turn to the reduction rules. First of all, we need to define $\Ctx$ contexts: intuitively, \emph{contexts} are proof terms with a \emph{hole}, where the hole is denoted by $\CtxHole$, and $\ECtx\angled\tm$ means replacing the unique hole in the context $\ECtx$ with the term $\tm$.
\begin{definition}[Weak head \IPC{} contexts]
  $\Ctx$ contexts are defined by the following grammar:
  \[ \Ctx ::= \CtxHole \mid \Ctx\, \tm \mid \Proj i \Ctx
   \mid \Case \Ctx - - .\]
\end{definition}

~\begin{center}
 \fbox{\begin{minipage}{15cm}
 \textbf{Reduction rules for \IPC{}}\vspace{0.5em}\\
 \begin{tabular}{cllcl}
 -- & Beta & $(\lambda\var.\, \tm)\, \tmtwo$ & $\mapsto$ & $\tm\subst\var\tmtwo$ \\
 -- & Projection & $\Proj i {\Pair{\tm_1}{\tm_2}}$ & $\mapsto$ & $\tm_i$ \\
 -- & Case & $\Case{\In{i}\tm}{\Bind\vartwo\tmtwo_1}{\Bind\vartwo\tmtwo_2}$ & $\mapsto$ & $\tmtwo_i\subst\vartwo\tm$ \\
\end{tabular}
\vspace{1em}
\\
 \textbf{Additional rules for \CalcV{}}\vspace{0.5em}\\
 \begin{tabular}{cllcll}
  -- & Visser-inj & $\Visser n {\Bind {\vec\var} {\In{i}\tm}} {\Bind\vartwo\tmtwo_1} {\Bind\vartwo\tmtwo_2} {\Bind\varthree\vec\tmthree}$ & $\mapsto$ & $\tmtwo_i\subst\vartwo{\lambda {\vec\var}.\, \tm}$ & $(i=1,2)$\\
  -- & Visser-efq & $\Visser n {\Bind{\vec\var}\Ctx\angled{\EFQ\tm}} {\Bind\vartwo\tmtwo_1} {\Bind\vartwo\tmtwo_2} {\Bind\varthree\vec\tmthree}$ & $\mapsto$ & $\tmtwo_1\subst\vartwo{(\lambda\vec\var.\,\EFQ\tm)}$ \\
  -- & Visser-app & $\Visser n {\Bind {\vec\var} {\Ctx\angled{\var_j\,\tm}}} {\Bind\vartwo\tmtwo_1} {\Bind\vartwo\tmtwo_2} {\Bind\varthree\vec\tmthree}$ & $\mapsto$ & $\tmthree_j\subst\varthree{\lambda {\vec\var}.\, \tm}$ & $(j=1\ldots n)$\\
 \end{tabular}
\end{minipage}}
\captionof{figure}{Reduction rules (\CalcV{})}
\label{fig:reduction-rules}
\end{center}~

The reduction rules for the proof terms are given in Figure~\ref{fig:reduction-rules}: the first block defines $\mapstoIPC{}$ by means of the usual rules for \IPC{}, and the second block defines $\mapstoV{}$ as $\mapstoIPC{}$ plus additional reduction rules for the new construct $\Visser n {\Bind {\vec\var} \tm'} {\Bind {\vec\vartwo} {\tmtwo_1}} {\Bind\vartwo {\tmtwo_2}} {\Bind\varthree {\vec\tmthree}}$, depending on different shapes that $\tm'$ might have. Let us explain the intuition.
In the first case (\emph{Visser-inj}), the term is the injection $\In{i}\tm$ with possibly free variables $\var_i$ of type $\typetwo_i \to \typethree_i$ for $i=1\ldots n$; in that branch one has chosen to prove one of the two disjuncts $\type_1$ or $\type_2$, and we may just reduce to the corresponding proof $\tmtwo_i$, in which we plug the proof $\tm$ but after binding the free variables $\vec\var$.
In the second case (\emph{Visser-efq}), the disjunction is proved by means of a contradiction, and that contradiction may be used to prove any of the cases $\tmtwo_1,\tmtwo_2,\vec\tmthree$.
In the third case (\emph{Visser-app}), the term contains an application with one of the variables bound by the Visser rule on the left hand side, \ie{} the proof uses one of the Visser assumptions to prove the disjunction. We reduce to the corresponding case $\tmthree_j$, where $\lambda\vec\var.\,\tm$ is substituted for the assumption of type $(\typetwo_i \to \typethree_i)_{i=1\dots n} \to \typetwo_j$.
The reduction relation $\toV{}$ is obtained as usual as the structural closure of the reduction $\mapstoV{}$ (and similarly for $\toIPC{}$).

As expected, \CalcV--terms normalize: we prove normalization
by providing an evaluation function that reduces \CalcV{}--terms
to intuitionistic terms.
The idea is to define the evaluator by structural recursion on typed terms,
and using normalization for \IPC{} after each recursive call.
\PasteTheorem{n-v}

The following is a consequence of \reflemma{eval-wd}:
\begin{theorem}
  \CalcV{}--terms normalize to \IPC{}--terms.
\end{theorem}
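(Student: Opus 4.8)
The plan is to obtain the statement from the evaluation function $\eval{\cdot}$ together with \reflemma{eval-wd}: I will show that every well-typed $\CalcV{}$-term $t$ reduces in finitely many $\toV{}$-steps to $\eval{t}$, and that $\eval{t}$ is a $\toV{}$-normal $\IPC{}$-term; weak normalization of $\CalcV{}$ with $\IPC{}$ normal forms then follows. Recall that $\eval{\cdot}$ is defined by structural recursion on typed terms, with an $\IPC{}$-normalization step after each recursive call: on a term whose outermost constructor belongs to $\IPC{}$ it evaluates the immediate subterms, reassembles them, and returns the $\IPC{}$-normal form of the result (computed by $\NF{\IPC}{\cdot}$, total since $\IPC{}$ is strongly normalizing); on a Visser term $\Visser n {\Bind{\vec\var}\tm} {\Bind\vartwo\tmtwo_1} {\Bind\vartwo\tmtwo_2} {\Bind\varthree\vec\tmthree}$ it first evaluates the main premise $\tm$ --- a proof of some $\type_1 \lor \type_2$ under only the implicative assumptions $\vec\var$ --- then, according to the shape of the $\IPC{}$-normal form $\eval{\tm}$, recursively evaluates exactly one branch among $\tmtwo_1,\tmtwo_2,\vec\tmthree$, substitutes for its discharged assumption the abstraction $\lambda\vec\var.\,(\cdot)$ of the relevant subterm of $\eval{\tm}$, and normalizes once more. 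By \reflemma{eval-wd} this recursion is well founded and $\eval{t}$ is an $\IPC{}$-term of the same type as $t$ in the same context; since every clause ends in $\NF{\IPC}{\cdot}$ (or returns a variable), $\eval{t}$ is moreover $\IPC{}$-normal.

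Given this, the argument has two moves. First, $t \toV^{*} \eval{t}$, by a routine induction following the clauses of $\eval{\cdot}$ (cf.\ \refthm{n-v}): in the $\IPC{}$-constructor cases one uses that $\toV{}$ is the structural closure of $\mapstoV{}$, so that reduction is allowed in every position --- in particular under the binder $\vec\var$ in the main-premise slot of a Visser term --- together with $\toIPC{}\subseteq\toV{}$ and the fact that $\NF{\IPC}{\cdot}$ is realized by a finite $\toIPC{}$-sequence; in the Visser case one reduces the main premise to $\eval{\tm}$ by the \ih{}, notes that an $\IPC{}$-normal proof of $\type_1 \lor \type_2$ all of whose free variables have implicative type must have one of the three forms $\In i \tm'$, $\Ctx\angled{\EFQ\tm'}$, $\Ctx\angled{\var_j\,\tm'}$ for some $\Ctx{}$ context --- \ie{} exactly the three left-hand sides of Visser-inj, Visser-efq and Visser-app --- fires the matching rule, and concludes by the \ih{} on the selected branch followed by the final $\toIPC{}$-sequence (reduction being stable under substitution). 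Second, $\eval{t}$ is $\toV{}$-normal: being an $\IPC{}$-term it has no Visser constructor, hence no Visser-redex, and being $\IPC{}$-normal it has no $\IPC{}$-redex. Combining the two moves, every $\CalcV{}$-term reduces to an $\IPC{}$-term in normal form, which is the claim.

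The real content sits in \reflemma{eval-wd}, and that is where the main obstacles are. The first is \emph{exhaustiveness of the Visser clause}: it rests on the classification of $\IPC{}$-normal forms of disjunction type in a context of purely implicative hypotheses invoked above. This is precisely where the side condition on Visser inferences --- that the main premise be closed apart from its discharged implicative assumptions --- does its work: without it, a free variable of arbitrary type could occupy head position of $\eval{\tm}$ and none of the three reduction rules would apply. It is also why $\eval{\cdot}$ normalizes the main premise before inspecting it, since the classification holds only for normal forms. The second obstacle is \emph{termination of the recursion}: the intermediate $\NF{\IPC}{\cdot}$ steps may enlarge a term, but they are not recursive calls to $\eval{\cdot}$; every recursive call of $\eval{\cdot}$ is on a proper subterm of its argument (the main premise $\tm$, or a branch $\tmtwo_1,\tmtwo_2,\tmthree_j$), so the recursion is structural and terminates. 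Granting \reflemma{eval-wd}, the remainder is bookkeeping.
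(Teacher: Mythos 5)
Your proposal is correct and follows essentially the same route as the paper: the theorem is obtained as a direct consequence of \reflemma{eval-wd}, whose proof rests on the evaluation function $\eval\cdot$ defined by structural recursion with an \IPC{}-normalization step after each recursive call, with exhaustiveness of the Visser clause justified by the classification of normal forms in implicative contexts (\reflemma{classification-V}). Your added observations---that $\eval\tm$ is $\toV{}$-normal because it is an \IPC{}-normal term without Visser constructors, and that termination is structural despite the intermediate normalizations---are exactly the ``easily follows'' steps the paper leaves implicit.
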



\section{Beyond \IPC{}: Harrop's Rule and \KP{}}
\label{sect:harrop}


In the previous section we have been expecially careful in imposing the restriction on the open assumptions for the application of our new rules, in order to keep our calculus inside the intuitionistic world and to obtain precisely a characterization of admissibility.
At this point, however, one can legitimately ask: what happens if we lift such restriction, and allow one or more admissible principles inside an extended logic?
The system of rules we introduced assumes then a different role, that is the role of providing a simple and modular way to obtain Curry-Howard systems for semi-classical logics arising from the addition to \IPC{} of axioms corresponding to admissible principles.

The simplest and oldest studied admissible rule of \IPC{} is the rule of \emph{independence of premise}, also known as \emph{Harrop's rule} in its propositional variant \cite{harrop_disjunctions_1956}:
\[\Admissible{\neg \typetwo \to \type_1 \lor \type_2\,}{(\neg \typetwo \to \type_1) \lor (\neg \typetwo \to \type_2)} \]
The logic that arises by adding it to \IPC{} has also been studied, and is known as Kreisel-Putnam logic (\KP{}). It was introduced by G. Kreisel and H. Putnam \cite{kreisel_unableitbarkeitbeweismethode_1957} to show a logic stronger than \IPC{} that still could satisfy the disjunction property, thus providing a counterexample to the conjecture of \L{}ukasiewicz that \IPC{} was the only such logic.

We now proceed to define a Curry-Howard calculus for \KP{} as an instance of the system we presented in the previous section. It suffices to realize that Harrop's rule is a particular case of \VisserRule 1 where the formula $\typethree$ is taken to be $\bot$ (note that the third disjunct in this instance of \VisserRule 1 becomes $\neg\typetwo \to \typetwo$, that implies both the other hypotheses $\neg\typetwo\to\type_1$ and $\neg\typetwo\to\type_2$; for this reason we can ignore it). Then we get the following simplified rule in natural deduction:

\begin{center}
 \AxiomC{$[\neg\typetwo]$}\noLine
 \UnaryInfC{$\vdots$}\noLine
 \UnaryInfC{$\type_1\lor\type_2$}
 \AxiomC{$[\neg\typetwo\to\type_1]$}\noLine
 \UnaryInfC{$\vdots$}\noLine
 \UnaryInfC{$\typefour$}
 \AxiomC{$[\neg\typetwo\to\type_2]$}\noLine
 \UnaryInfC{$\vdots$}\noLine
 \UnaryInfC{$\typefour$}
 \LeftLabel{Harrop }
 \TrinaryInfC{$\typefour$}
 \DisplayProof
\end{center}

The restriction on the assumptions of the main premise is now gone, and open proofs are allowed. In fact Harrop's principle is provable in our system:

\begin{center}\centerline{
 \AxiomC{$[\neg \typetwo \to \type_1 \lor \type_2]_{(2)}$}
 \AxiomC{$[\neg\typetwo]_{(1)}$}
 \BinaryInfC{$\type_1\lor\type_2$}
 \AxiomC{$[\neg\typetwo\to\type_1]_{(1)}$}
 \UnaryInfC{$(\neg\typetwo\to\type_1) \lor (\neg\typetwo\to\type_2) $}
 \AxiomC{$[\neg\typetwo\to\type_2]_{(1)}$}
 \UnaryInfC{$(\neg\typetwo\to\type_1) \lor (\neg\typetwo\to\type_2)$}
 \RightLabel{\footnotesize$(1)$}
 \LeftLabel{Harrop}
 \TrinaryInfC{$(\neg\typetwo\to\type_1) \lor (\neg\typetwo\to\type_2)$}
 \RightLabel{\footnotesize$(2)$}
 \UnaryInfC{$(\neg \typetwo \to \type_1 \lor \type_2) \to (\neg\typetwo\to\type_1) \lor (\neg\typetwo\to\type_2)$}
 \DisplayProof
}\end{center}

The proof term is a simplified version of the proof term for $V_1$, where we remove the term corresponding to the trivialized third disjunct:

\begin{center}
 \AxiomC{$\Gamma, \var\colon\neg\typetwo \vdash \tm\colon \type_1\lor\type_2$}
 \AxiomC{$\Gamma, \vartwo\colon\neg\typetwo\to\type_1 \vdash \tmtwo_1\colon \typefour$}
 \AxiomC{$\Gamma, \vartwo\colon\neg\typetwo\to\type_2 \vdash \tmtwo_2\colon \typefour$}
 \TrinaryInfC{$\Gamma \vdash \harrop {\Bind\var\tm} {\Bind\vartwo{\tmtwo_1}} {\Bind\vartwo{\tmtwo_2}} \colon \typefour$}
 \DisplayProof
\end{center}

By inspecting the reduction rules for \CalcV{}, we realize that the rule \emph{Visser-app} has no counterpart in \KP{}: since the Harrop assumptions have negated type, their use in proof terms is completely encapsulated in exfalso terms (see Classification, \reflemma{classification} below). Therefore the reduction rules for \KP{} are the ones for \IPC{} (Figure~\ref{fig:reduction-rules}) plus the additional rules \emph{Harrop-inj} and \emph{Harrop-efq} in Figure~\ref{fig:reduction-rules-kp}.
We denote with $\mapstoKP{}$ the toplevel reduction for \KP{}, and
with $\toKP{}$ its structural closure.

\begin{center}
  \fbox{\begin{minipage}{12.5cm}
 \begin{tabular}{cllcl}
  -- & Harrop-inj & $\harrop {\Bind\var\In{i}\tm} {\Bind\vartwo\tmtwo_1} {\Bind\vartwo\tmtwo_2}$ & $\mapsto$ & $\tmtwo_i\subst\vartwo{\lambda\var.\,\tm}$ \\
  -- & Harrop-efq & $\harrop {\Bind\var\Ctx\angled{\EFQ\tm}} {\Bind\vartwo\tmtwo_1} {\Bind\vartwo\tmtwo_2}$ & $\mapsto$ & $\tmtwo_1\subst\vartwo{(\lambda\var.\,\EFQ\tm)}$ \\
 \end{tabular}
\end{minipage}}
\captionof{figure}{Reduction rules (\KP{})}
\label{fig:reduction-rules-kp}
\end{center}

We prove for \KP{} the usual properties of \emph{subject reduction}, \emph{classification}, and \emph{strong normalization}.
As expected we denote with $\dashKP{}$ the provability in \KP{}, but we use simply $\vdash$ when not ambiguous.
\PasteTheorem{s-r-kp}

In order to classify normal forms of \KP{}, we need to consider
proof terms with possibly open Harrop assumptions: we denote with $\Gamma_\neg$ a \emph{negated} typing context,
\ie{} of the form $\Gamma_\neg = \{ \var_1\colon\neg\type_1, \ldots, \var_n\colon\neg\type_n \}$.
We obtain the following classification of normal forms:
\PasteLemma{classification}

We prove that \KP{} enjoys the strong normalization property, \ie{}
all typable terms are strongly normalizing.
We use a modified version of the method of \emph{reducibility candidates} by Girard-Tait \cite{girard_proofs_1989}.
The differences with respect to the usual proof are that Harrop and exfalso terms are added to neutral terms, and that the reductions for $\harr$
(which involve terms under binders) require special treatment.

\PasteTheorem{kp-sn}

The complete proof is on the appendix. We can now prove the disjunction property:

\begin{lemma}[Consistency]
  $\not\dashKP{} \tm\colon\bot$ for no $\tm$.
\end{lemma}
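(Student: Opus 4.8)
The plan is the standard consistency argument by normalization. Suppose, toward a contradiction, that $\vdash \tm\colon\bot$ for some closed term $\tm$. By strong normalization (\refthm{kp-sn}) every $\toKP{}$-reduction sequence from $\tm$ terminates, so there is a \KP{}-normal form $\tm'$ with $\tm \toKP{}^{*}\tm'$; by subject reduction (\refthm{s-r-kp}) we still have $\vdash \tm'\colon\bot$ in the empty context. Since the empty context is in particular a negated context $\Gamma_\neg$ (vacuously), the Classification Lemma (\reflemma{classification}) applies to $\tm'$.

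The next step is to run through the cases of \reflemma{classification} for a normal form whose type is $\bot$. As $\bot$ is none of $\type\to\typetwo$, $\type\wedge\typetwo$, $\type\vee\typetwo$, the term $\tm'$ cannot be an introduction (a $\lambda$-abstraction, a pair, or an injection), so it must fall into one of the ``neutral'' shapes of the classification. Each such shape is organized around either a free variable in head position, an exfalso subterm, or a $\harr$ term whose main premise is again classified. With the empty context no free variable is available, so the pure head-variable shapes are impossible outright. For the exfalso case $\tm' = \EFQ\tmtwo$, the subterm $\tmtwo$ is itself a closed normal form of type $\bot$ but strictly smaller, so a side induction on the size of the term disposes of it; likewise a $\harr$ term of type $\bot$ would force, by the (recursive) classification of its main premise of type $\type_1\lor\type_2$ in the context $\{\var\colon\neg\typetwo\}$, that premise to be headed by $\var$ and hence to have type $\bot$, which is not $\type_1\lor\type_2$ — a contradiction. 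In every case we reach a contradiction, hence no such $\tm'$, and therefore no such $\tm$, can exist.

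The only point requiring care is that the case analysis on \reflemma{classification} really is exhaustive: one must be sure the classification of neutral normal forms traces the head of any closed term either back to a context variable (absent here) or down to a strictly smaller closed term of type $\bot$ (handled by the auxiliary size induction). Once the statement of \reflemma{classification} is at hand this is routine bookkeeping; the substantive work — strong normalization and the classification of normal forms — has already been carried out.
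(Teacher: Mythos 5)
Your argument is essentially the paper's own proof: normalize via strong normalization (with subject reduction), apply the Classification Lemma with $\Gamma_\neg=\emptyset$ to rule out a head variable, and dispose of the exfalso case by a side induction on the size of the term. The only difference is your extra discussion of a $\harr$-headed normal form, which is redundant since \reflemma{classification} already establishes that no normal form ends in the Harrop rule.
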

\begin{proof}
  Let us assume that there exists
  $\tm$ (which we assume in normal form by \refthm{kp-sn}) such that $\dashKP{} \tm\colon\bot$, and derive a contradiction. We proceed by induction on the size of $\tm$.
  The base case is impossible because by \reflemma{classification} $\tm$ cannot be a variable.
  As for the inductive case, by \reflemma{classification}, $\tm$ is either an exfalso, or
  $\var\,\tmthree\in\Gamma_\neg$ for some $\var\in\Gamma_\neg$.
  In the former case $\tm=\EFQ\tmtwo$ for some $\tmtwo$ such that $\dashKP{} \tmtwo\colon\bot$, and we use the \ih{}; the latter case is not possible, since $\Gamma_\neg=\emptyset$.
\end{proof}

\begin{theorem}[Disjunction property]\label{th:disjunctionp}
If $\vdash \type\vee\typetwo$, then $\vdash\type$ or $\vdash\typetwo$.
\end{theorem}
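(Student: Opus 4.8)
The plan is to extract from a closed proof of $\type\vee\typetwo$ a closed \emph{normal} proof, and then read off a proof of one of the disjuncts directly from the classification of normal forms. All the real work is already done by strong normalization and the classification lemma; what remains is a short case analysis.

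First, suppose $\dashKP{}\tm\colon\type\vee\typetwo$ for some term $\tm$. By strong normalization (\refthm{kp-sn}), $\tm$ reduces to a normal form, and by subject reduction (\refthm{s-r-kp}) this normal form still has type $\type\vee\typetwo$ in the empty context; moreover reduction never creates free variables, so the normal form is still closed. Hence we may assume from the start that $\tm$ is itself a closed normal form.

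Next, I would apply the classification of normal forms (\reflemma{classification}) with $\Gamma_\neg=\emptyset$. It tells us that a normal form of a disjunctive type $\type\vee\typetwo$ is either (i) an exfalso term $\EFQ\tmtwo$, (ii) an application $\var\,\tmthree$ headed by a variable $\var\in\Gamma_\neg$, or (iii) an introduction term, which for a disjunctive type means an injection $\In i\tmtwo$. Case (ii) is impossible because $\Gamma_\neg=\emptyset$. Case (i) is impossible because it would require $\dashKP{}\tmtwo\colon\bot$, contradicting Consistency. Therefore $\tm=\In i\tmtwo$ for some $i\in\{1,2\}$, and inversion on the typing rule for injections gives $\vdash\tmtwo\colon\type$ when $i=1$ and $\vdash\tmtwo\colon\typetwo$ when $i=2$. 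In either case one of the disjuncts is provable, which is what we wanted.

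There is no genuine obstacle left: the weight is carried entirely by \refthm{kp-sn} and \reflemma{classification} (the latter itself resting on subject reduction). The only points worth a moment's care are that \reflemma{classification} does apply to the genuinely closed term $\tm$ — the empty context being the trivial instance $\Gamma_\neg=\emptyset$ of a negated context — and that passing to a normal form preserves both the type (subject reduction) and closedness (reduction introduces no free variables), so that the classification is being invoked on a legitimate object.
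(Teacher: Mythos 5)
Your proof is correct and follows essentially the same route as the paper's: normalize via \refthm{kp-sn}, invoke \reflemma{classification} with $\Gamma_\neg=\emptyset$, exclude the exfalso case by Consistency, and conclude by inversion on the injection. Your extra remarks on subject reduction and preservation of closedness just make explicit what the paper leaves implicit.
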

\begin{proof}
  Assume $\vdash \tm\colon \type\vee\typetwo$ for $\tm$ in normal form by \refthm{kp-sn}.
 First note that $\tm\neq\EFQ\tmtwo$, because otherwise by inversion $\vdash \tmtwo\colon\bot$, contradicting consistency.
 By \reflemma{classification} (with $\Gamma_\neg=\emptyset$) $\tm$ is an injection. Conclude by inversion.
\end{proof}


\section{Conclusions and Future Work}\label{sect:conclusion}
Our system provides a meaningful explanation of the admissible rules in terms of normalization of natural deduction proofs. In addition, by simply lifting the condition of having closed proofs on the main premise, we can study intermediate logics characterized by the axioms corresponding to some admissible rules; the study of the Kreisel-Putnam logic exemplifies this approach.

We believe that our presentation is well-suited to continue the study of admissibility in intuitionistic systems, a subject that is currently mostly explored with semantic tools.
We devised powerful proofs of normalization for our systems \KP{} and \CalcV{}, and we will try to extend these results to other similarly obtained systems.
We conclude with some remarks on future generalizations.

\subsection{The Logic \AD{}}
Now that we have shown the potential of our system in analysing the extension of \IPC{} with axioms corresponding to admissible rules, we might wonder what could happen when we try to add several of them. We can be even more ambitious: what if we want to add \emph{all} the Visser rules to \IPC{}?
A theorem by Rozi\`ere greatly simplifies our task:
\begin{theorem}[Rozi\`ere~\cite{roziere_admissible_1992}]
  All Visser rules are derivable in the logic \AD{}, obtained by adding the $V_1$ axiom schema to \IPC{}.
\end{theorem}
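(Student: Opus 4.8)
The plan is to show, by induction on $n$, that the principle $V_n$ corresponding to the rule $\VisserRule n$ — namely the implication from the antecedent of $\VisserRule n$ to its disjunctive conclusion — is provable in $\AD = \IPC + V_1$. Since a rule is derivable exactly when its associated principle is provable, this yields derivability of every $\VisserRule n$ in $\AD$. The base case $n=1$ is the axiom schema $V_1$ itself. Throughout, abbreviate by $\vec H \to X$ the iterated implication $(\typetwo_1\to\typethree_1)\to\cdots\to(\typetwo_n\to\typethree_n)\to X$, and by $\vec H' \to X$ the same with the first conjunct $\typetwo_1\to\typethree_1$ omitted; with this notation $(\typetwo_1\to\typethree_1)\to(\vec H'\to X)$ is literally $\vec H\to X$.

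We first isolate an auxiliary fact: $V_1$ generalizes to an arbitrary finite disjunction in the consequent. For arbitrary formulas $P$, $Q$, $F_1, \dots, F_k$,
$$\IPC + V_1 \;\vdash\; \bigl((P\to Q)\to F_1\lor\cdots\lor F_k\bigr)\;\to\;\Bigl(\bigl((P\to Q)\to P\bigr)\lor\bigl((P\to Q)\to F_1\bigr)\lor\cdots\lor\bigl((P\to Q)\to F_k\bigr)\Bigr).$$
This follows by induction on $k$: for $k\le 2$ it is trivial or a direct instance of $V_1$; for the step one writes $F_1\lor\cdots\lor F_{k+1}$ as $(F_1\lor\cdots\lor F_k)\lor F_{k+1}$, applies $V_1$, then applies the induction hypothesis to the disjunct $(P\to Q)\to(F_1\lor\cdots\lor F_k)$ and flattens the resulting nested disjunction by ordinary $\IPC$ reasoning.

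For the inductive step, suppose $V_{n-1}$ is already provable in $\AD$ and fix a proof $h\colon \vec H\to\type_1\lor\type_2$; the goal is the $(n+2)$-ary disjunction of $V_n$. Under a hypothesis $x\colon\typetwo_1\to\typethree_1$, the term $h\,x$ has type $\vec H'\to\type_1\lor\type_2$, which is precisely the antecedent of an instance of $V_{n-1}$ (with implicative premises $\typetwo_2\to\typethree_2,\dots,\typetwo_n\to\typethree_n$ and disjuncts $\type_1,\type_2$); hence, discharging $x$, we obtain a proof of $(\typetwo_1\to\typethree_1)\to D$, where $D$ is the $(n+1)$-ary disjunction $\bigl(\bigvee_{j=2}^n(\vec H'\to\typetwo_j)\bigr)\lor(\vec H'\to\type_1)\lor(\vec H'\to\type_2)$. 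Applying the auxiliary fact with $P\to Q := \typetwo_1\to\typethree_1$ and the $F_l$ the disjuncts of $D$, and using $(\typetwo_1\to\typethree_1)\to(\vec H'\to X)=\vec H\to X$, we get a proof of
$$\bigl((\typetwo_1\to\typethree_1)\to\typetwo_1\bigr)\;\lor\;\Bigl(\textstyle\bigvee_{j=2}^n(\vec H\to\typetwo_j)\Bigr)\;\lor\;(\vec H\to\type_1)\;\lor\;(\vec H\to\type_2).$$
This is $V_n$ except that its first disjunct is $(\typetwo_1\to\typethree_1)\to\typetwo_1$ rather than $\vec H\to\typetwo_1$; but the former implies the latter in $\IPC$ by weakening in the remaining premises, so one final $\lor$-elimination — leaving every disjunct untouched except replacing the first via this weakening — yields $V_n$, closing the induction.

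There is no deep obstacle: the whole argument lives in $\IPC$ extended by the single axiom schema $V_1$, and amounts to peeling off the implicative premises one at a time. The points that require care are the precise statement and inductive proof of the generalized $V_1$ fact, and the bookkeeping with finite disjunctions — flattening nested $\lor$'s, re-indexing the $\typetwo_j$-disjuncts, and the concluding weakening step — all routine but easy to get slightly wrong.
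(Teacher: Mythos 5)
Your proposal is correct. There is nothing in the paper to compare it against: the theorem is quoted from Rozi\`ere's thesis and the paper gives no proof of it, so your argument is a self-contained reconstruction. Both components check out. The $k$-ary generalization of $V_1$ goes through by induction on $k$ (the regrouping of $F_1\lor\cdots\lor F_{k+1}$, the case analysis, and the final flattening are ordinary \IPC{} manipulations), and in the main induction the key computations are right: under $x\colon B_1\to C_1$ the term $h\,x$ instantiates the antecedent of $V_{n-1}$; the generalized $V_1$ applied to $(B_1\to C_1)\to D$ turns each disjunct $(B_1\to C_1)\to(\vec H'\to X)$ into $\vec H\to X$ by currying; and the leftover disjunct $(B_1\to C_1)\to B_1$ does entail $\vec H\to B_1$ in \IPC{} by weakening in the unused premises, so the concluding $\lor$-elimination is legitimate. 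Your identification of derivability of a rule with provability of the corresponding implication is justified by the deduction theorem, which holds in axiomatic extensions of \IPC{} such as \AD{}; and the curried reading of the antecedents is interchangeable with the conjunctive one. This ``peel off one implicative premise at a time'' induction, driven by a finite-disjunction version of $V_1$, is the standard route to the result, essentially the one in Rozi\`ere's thesis.
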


Clearly, since the Visser rules are derivable in \AD{} they are also admissible; as we know from Theorem~\ref{thm:visser-basis} this means that they form a basis for all the admissible rules of \AD{}, and since they are derivable we obtain:

\begin{corollary}
  The logic \AD{} is structurally complete.
\end{corollary}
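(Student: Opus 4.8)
The plan is simply to chain the two theorems immediately preceding the statement. First I would record that, by the theorem of Rozière stated above, every Visser rule \VisserRule n is derivable in \AD{}: for each $n$, reading \VisserRule n as the implication from its $n$-ary antecedent to its $(n+2)$-ary disjunctive consequent yields a theorem of \AD{}. A derivable rule is in particular admissible — if the premises of \VisserRule n are provable then so is their conjunction, hence by modus ponens so is the consequent — so each \VisserRule n is admissible in \AD{}. Here we only use that \AD{} extends \IPC{} and therefore proves conjunction introduction and is closed under modus ponens.

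Next I would apply \refthm{visser-basis}: since the Visser rules are admissible in \AD{}, they form a basis for the admissible rules of \AD{}. Unfolding the notion of \emph{basis}, this means that every admissible rule of \AD{} is derivable in \AD{} extended with the Visser rules \VisserRule 1, \VisserRule 2, $\dots$ as additional inference rules.

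Finally I would observe that augmenting a logic with rules that are already derivable does not enlarge the class of derivable rules: given a derivation witnessing an admissible rule $\Admissible{\type_1,\dots,\type_n}{\typetwo}$ in \AD{} extended with the Visser rules, each application of some \VisserRule n can be spliced out using the \AD{}-proof of its principle supplied by Rozière, leaving a derivation in \AD{} alone. Hence every admissible rule of \AD{} is derivable, \ie{} \AD{} is structurally complete — this is exactly the argument the excerpt already gives for \CPC{}, now with \AD{} in place of \CPC{}.

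I do not expect any real obstacle: the mathematical content is entirely carried by Rozière's theorem and \refthm{visser-basis}. The only step needing (routine) care is the last one, the elimination of admissible-rule applications for rules with finitely many premises, which goes through because \AD{} inherits from \IPC{} the deduction theorem, conjunction introduction, and closure under substitution.
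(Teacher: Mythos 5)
Your proposal is correct and follows exactly the paper's own route: Rozi\`ere's theorem gives derivability of the Visser rules in \AD{}, hence their admissibility, then \refthm{visser-basis} makes them a basis for the admissible rules of \AD{}, and derivability of the basis yields structural completeness. Your extra care in unfolding the notion of basis and splicing out Visser-rule applications only spells out what the paper leaves implicit (mirroring its \CPC{} argument), so there is no substantive difference.
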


However, we also know from Iemhoff \cite{iemhoff_nother_2001} that \IPC{} is the only logic that has the Visser rules as admissible rules and satisfies the disjunction property. This means that \AD{} cannot satisfy the disjunction property.
This was also proved with different techniques by Rozi\`ere, who also showed that \AD{} is still weaker than \CPC{}.
Given these properties, \AD{} seems the best candidate to be studied with our technique.

\subsection{Arithmetic}

Since its inception with Harrop \cite{harrop_disjunctions_1956}, the motivation for studying admissible rules of \IPC{} was to understand arithmetical systems.
A famous theorem of de Jongh states that the propositional formulas whose arithmetical instances are provable in \emph{intuitionistic arithmetic} (\HA{}) are exactly the theorems of \IPC{}, and many studies of the admissible rules of \HA{} (like Visser \cite{visser_substitutions_2002}, Iemhoff and Artemov \cite{artemov_jonghs_2004}) originated from it.
In particular Visser shows that the propositional admissible rules of \HA{} coincide with those of \IPC{}, and that $\Sigma_1^0$ rules are also related.

Harrop's principle, that we have investigated in this paper, is also known as the propositional Independence of Premise principle.
Its first order version:
\[ (\neg A \to \exists x.\, B(x)) \to \exists x.\, (\neg A \to B(x))  \tag{IP} \]
corresponds to an admissible rule of \HA{} that has an important status in the theory of arithmetic, and was given a constructive interpretation for example by G\"odel~\cite{godel_uber_1958} with his well known \emph{Dialectica} interpretation.

We can assign to IP a proof term and two reduction rules that act in the same way as the ones introduced for Harrop's rule: that is, we will distinguish the two cases where there is an explicit proof of the existential in the antecedent, and where an exfalso reasoning has been carried on.
We believe that a more advanced study of other admissible rules of \HA{} can be carried on similar grounds.

\bibliographystyle{eptcs}
\bibliography{bibliografia}

\appendix
\section{Theorems on \CalcV{}}

First some definitions. We denote with $\dashV{}$ the provability in \CalcV{} (but we use $\vdash$ when not ambiguous).
We denote with $\Gamma_\to$ an \emph{implicative} typing context,
\ie{} of the form $\Gamma_\to = \{ \var_1\colon\type_1\to\typetwo_1, \ldots, \var_n\colon\type_n\to\typetwo_n \}$.
We say that a term is \emph{\toneutral} if it has the form
 $\Ctx\angled{\var\,\tmtwo}$ or $\Ctx\angled{\EFQ\tmtwo}$.

\begin{lemma}[Classification for \CalcV{}]\label{l:classification-V}
 Let $\Gamma_{\to} \dashV{} t\colon \type$ for $\tm$ in normal form, and $\tm$ not \toneutral:
\begin{itemize}
 \item \emph{Implication}: if $\type= \typetwo\to\typethree$, then $\tm$ is either an abstraction or a variable in $\Gamma_{\to}$;
 \item \emph{Disjunction}: if $\type= \typetwo\lor\typethree$, then $\tm$ is an injection;
 \item \emph{Conjunction}: if $\type= \typetwo\land\typethree$, then $\tm$ is a pair;
 \end{itemize}
\end{lemma}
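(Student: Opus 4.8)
The plan is to argue by induction on the size of the normal form $\tm$, by case analysis on its outermost constructor, using repeatedly \textbf{(i)} inversion of the typing rules to constrain the shape of a term from its type, and \textbf{(ii)} the grammar of $\Ctx$ contexts (which is closed under composition) to recognize $\toneutral$ terms. If $\tm$ is an ``introduction'' form, the claim is immediate: a variable must lie in $\Gamma_{\to}$, so its type is an implication, which is exactly the second alternative of the \emph{Implication} clause (the \emph{Disjunction} and \emph{Conjunction} clauses being then vacuous by type mismatch); an abstraction, a pair, or an injection has respectively an implication, conjunction, or disjunction type by inversion, which gives the relevant clause with the other two vacuous. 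If $\tm = \EFQ\tmtwo$, then $\tm = \Ctx\angled{\EFQ\tmtwo}$ with $\Ctx = \CtxHole$, so $\tm$ is $\toneutral$, against the hypothesis.

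It remains to rule out the elimination forms by showing that, under our hypotheses, each makes $\tm$ either $\toneutral$ or reducible. If $\tm = \Proj i \tmtwo$, then by inversion $\tmtwo$ is a normal form of conjunction type in the same context $\Gamma_{\to}$ and a strict subterm, so the induction hypothesis applies: either $\tmtwo$ is $\toneutral$, or it is a pair by the \emph{Conjunction} clause. In the second case $\tm$ is a \emph{Projection} redex, contradicting normality; in the first case $\tmtwo = \Ctx'\angled{\cdots}$, hence $\tm = (\Proj i\Ctx')\angled{\cdots}$ is $\toneutral$ because $\Ctx ::= \Proj i \Ctx$. The case $\tm = \Case\tmtwo{\Bind\vartwo{\tmtwo_1}}{\Bind\vartwo{\tmtwo_2}}$ is symmetric, using the \emph{Disjunction} clause on $\tmtwo$ and the \emph{Case} redex. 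If $\tm = \tm_1\,\tm_2$, then by inversion $\tm_1$ is a normal form of arrow type in $\Gamma_{\to}$ and a strict subterm, and it is not an abstraction (else $\tm$ is a \emph{Beta} redex); by the induction hypothesis $\tm_1$ is $\toneutral$ or, by the \emph{Implication} clause, a variable of $\Gamma_{\to}$, and in both cases $\tm = \tm_1\,\tm_2$ is $\toneutral$ because $\Ctx ::= \Ctx\,\tm$. All of these contradict the hypothesis.

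The crucial case is $\tm = \Visser n {\Bind{\vec\var}{\tmtwo}}{\Bind\vartwo{\tmtwo_1}}{\Bind\vartwo{\tmtwo_2}}{\Bind\varthree{\vec\tmthree}}$, and it is where the closedness restriction on the main premise pays off. By that restriction, $\tmtwo$ is typed in the purely \emph{implicative} context $\vec\var\colon(\typetwo_i\to\typethree_i)_{i=1\dots n}$, is a normal form of disjunction type $\type_1\lor\type_2$, is a strict subterm of $\tm$, and has no free variables outside $\vec\var$; hence the induction hypothesis applies. If $\tmtwo$ is not $\toneutral$ it is an injection by the \emph{Disjunction} clause, making $\tm$ a \emph{Visser-inj} redex; if $\tmtwo$ is $\toneutral$, say $\tmtwo = \Ctx\angled{\EFQ\tmthree}$ or $\tmtwo = \Ctx\angled{\var\,\tmthree}$, then $\tm$ is respectively a \emph{Visser-efq} or a \emph{Visser-app} redex — in the latter case because $\var$, being a free variable of $\tmtwo$, is necessarily one of $\var_1,\dots,\var_n$. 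In every case $\tm$ is not normal, a contradiction, so this case cannot occur (and neither, via the previous paragraph, can an application with a Visser term at its head). I expect the main difficulty to be exactly this interplay: verifying that the $\Ctx$-context grammar is stable enough to propagate $\toneutral$-ness through the $\Proj$, $\Case$ and application layers, and that the closedness of a Visser main premise is precisely what lets one both invoke the induction hypothesis (the context stays implicative) and turn a stuck head variable into a \emph{Visser-app} redex.
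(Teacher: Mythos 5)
Your proof is correct and follows essentially the same route as the paper's: the paper argues by induction on the typing derivation with exactly the same case analysis (introduction forms are immediate, elimination forms propagate \toneutral{}-ness through the $\Ctx$-grammar or create a redex, and the Visser case is excluded because its main premise, typed in the purely implicative context $\vec\var$, must be an injection or \toneutral{} and hence yields a Visser-inj, Visser-efq or Visser-app redex). Your only deviations are cosmetic --- induction on term size instead of on the derivation, and spelling out explicitly why a \toneutral{} main premise with head variable in $\vec\var$ gives a Visser-app redex, a detail the paper leaves implicit.
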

\begin{proof}
 By induction on the type derivation of $\tm$:
 \begin{itemize}
  \item ($\Ax$) $\tm$ is a variable in $\Gamma_\to$. By definition of $\Gamma_\to$, the type of $\tm$ is an implication, and we conclude.
  \item ($\to_I$) $\tm$ is an abstraction, and we conclude.
  \item ($\to_E$) and $\tm=\tmtwo\,\tmthree$ with $\Gamma_\to \vdash \tmtwo\colon \typetwo\to\typethree$. Because $\tm$ is in normal form, $\tmtwo$ cannot be an abstraction. By \ih{}, $\tmtwo$ is either a variable in $\Gamma_\to$ or is \toneutral; in both cases $\tm$ is \toneutral{}.
  \item ($\lor_I$) $\tm$ is an injection, and we conclude.
  \item ($\lor_E$) and $\tm=\Case \tmtwo - -$ with $\Gamma_\to\vdash \tmtwo \colon \typefour\lor\typefour'$. Because $\tm$ is in normal form, $\tmtwo$ cannot be an injection.
  By \ih{} $\tmtwo$ is \toneutral, and therefore $\tm$ is \toneutral{}.
  \item ($\land_I$) $\tm$ is a pair, and we conclude.
  \item ($\land_E$) and $\tm=\Proj i \tmtwo$ with $\Gamma_\to\vdash \tmtwo \colon \typefour\land\typefour'$. Because $\tm$ is in normal form, $\tmtwo$ cannot be a pair. By \ih{} $\tmtwo$ is \toneutral, and therefore $\tm$ is \toneutral{}.
  \item (Visser$_n$) not possible. Assume $\tm=\Visser n {\Bind{\vec\var}\tmtwo} - - -$ with $\vec \var \colon (\type_i \to \typetwo_i)_{i=1\dots n} \vdash \tmtwo\colon \type_1\lor\type_2$ by inversion, and derive a contradiction. By \ih{} $\tmtwo$ is \toneutral{} or an injection, but both cases contradict the hypothesis that $\tm$ is a normal form.
 \end{itemize}
\end{proof}

In order to prove normalization, we define an evaluation function $\eval\cdot$,
 mapping each typable term in \CalcV{} to its normal form.
 We first assume a corresponding function for \IPC{}:

\begin{definition}[$\evali\cdot$]
  We call $\evali\cdot$ the function mapping each term typable in \IPC{} to its normal form.
\end{definition}

\begin{definition}[$\eval\cdot$]
   Let $\tm$ a term typable in \CalcV. 
   We define its evaluation $\eval\tm$ by structural induction:
\[\begin{array}{ll} 
  \eval{\var} & \defeq \var \\
  \eval{\tm\,\tmtwo} & \defeq \evali{\eval\tm\,\eval\tmtwo} \\
  \eval{\lambda\var.\,\tm} & \defeq \lambda\var.\,\eval\tm \\
  \eval{\EFQ\tm} & \defeq \EFQ{(\eval\tm)} \\
  \eval{\angled{\tm,\tmtwo}} & \defeq \angled{\eval\tm,\eval\tmtwo} \\
  \eval{\Proj i \tm} & \defeq \evali{\Proj i {(\eval \tm)}} \\
  \eval{\In i \tm} & \defeq \evali{\In i {(\eval \tm)}} \\
  \eval{\Case \tm {\Bind\vartwo\tmtwo_1} {\Bind\vartwo\tmtwo_2}} & \defeq \evali{\Case {\eval\tm} {\Bind\vartwo\eval{\tmtwo_1}} {\Bind\vartwo\eval{\tmtwo_2}}} \\
  \eval{\Visser n {\Bind {\vec\var} \tm} {\Bind\vartwo \tmtwo_1} {\Bind\vartwo \tmtwo_2} {\Bind \varthree {\vec \tmthree} }} & \defeq \begin{cases}
    \evali{\eval{\tmtwo_i}\subst\vartwo{\lambda {\vec\var}.\, \tm'}} & \text{if } \eval\tm=\In{i}{\tm'} \\
    \evali{\eval{\tmtwo_1}\subst\vartwo{\lambda {\vec\var}.\, \EFQ{\tm'}}} & \text{if } \eval\tm=\Ctx\angled{\EFQ{\tm'}} \\
    \evali{\eval{\tmthree_j}\subst\varthree{\lambda {\vec\var}.\, \tm'}} & \text{if } \eval\tm=\Ctx\angled{\var_j\,\tm'} \\
  \end{cases} \\
\end{array}\]
Note: the three cases in the definition of $\eval\cdot$ on Visser terms
 are exhaustive by inspection of the normal forms of type disjunction (\reflemma{classification-V})
 since it holds by inversion that $\Gamma_\to\vdash\tm\colon\type_1\lor\type_2$ with $\dom{\Gamma_\to}=\vec\var$.
\end{definition}

\begin{lemma}[$\eval\cdot$ well-defined]\label{l:eval-wd} For every \CalcV-term $\tm$ s.t. $\Gamma\vdash_{\text{\CalcV}} \tm\colon\type$:
  \begin{enumerate}
    \item $\Gamma \vdash_{\IPC{}} \eval\tm \colon \type$,
    \item $\eval\tm$ is normal,
    \item $\tm\to_{\CalcV{}}^*\eval\tm$.
  \end{enumerate}
\end{lemma}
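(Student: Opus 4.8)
The plan is to prove the three statements simultaneously by structural induction on $\tm$, following the recursive structure of the definition of $\eval\cdot$. The induction hypothesis gives us, for each immediate subterm $\tmtwo$ of $\tm$ with $\Gamma'\vdash_{\CalcV}\tmtwo\colon\typetwo$, that $\Gamma'\vdash_{\IPC}\eval\tmtwo\colon\typetwo$, that $\eval\tmtwo$ is normal, and that $\tmtwo\to_{\CalcV}^*\eval\tmtwo$. For the cases already present in \IPC{} (variable, application, abstraction, exfalso, pair, projection, injection, case), the argument is routine: for (1) we use the \ih{} on subterms to rebuild a well-typed \IPC{}-term and then invoke the fact that $\evali\cdot$ preserves types; for (2) we invoke the fact that $\evali\cdot$ returns normal forms (and for abstraction/pair, that normality is compositional); for (3) we use that $\to_{\CalcV}$ is a congruence to push the subterm reductions $\tmtwo\to_{\CalcV}^*\eval\tmtwo$ under the term constructor, and then that $\to_{\IPC}^*\,\subseteq\,\to_{\CalcV}^*$ takes the result to $\evali{\cdots}$.

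The only genuinely new case is the Visser term $\tm = \Visser n {\Bind{\vec\var}\tmtwo}{\Bind\vartwo\tmtwo_1}{\Bind\vartwo\tmtwo_2}{\Bind\varthree{\vec\tmthree}}$. Here I would first record that by inversion $\vec\var\colon(\typetwo_i\to\typethree_i)_{i}\vdash_{\CalcV}\tmtwo\colon\type_1\lor\type_2$ with $\dom{\Gamma_\to}=\vec\var$, so by the \ih{} $\eval\tmtwo$ is a normal \IPC{}-term of type $\type_1\lor\type_2$ in that context; by the Classification Lemma for \CalcV{} (\reflemma{classification-V}), $\eval\tmtwo$ is either an injection $\In{i}{\tm'}$, or \toneutral{} — i.e. of the form $\Ctx\angled{\EFQ{\tm'}}$ or $\Ctx\angled{\var_j\,\tm'}$ — which shows the three defining cases are exhaustive. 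In each case one then checks the three claims. For (1), one needs that $\lambda\vec\var.\,\tm'$ (resp. $\lambda\vec\var.\,\EFQ{\tm'}$) has the type required by the corresponding branch premise — this follows by reading off the type of $\tm'$ from the subderivation typing $\eval\tmtwo$ and $\lambda$-abstracting over $\vec\var$ — and then a substitution lemma for \IPC{} together with type preservation of $\evali\cdot$. For (2) one just invokes that $\evali\cdot$ yields a normal form. For (3), the key observation is that the corresponding \CalcV{} reduction rule (Visser-inj / Visser-efq / Visser-app) fires on a term already reduced to the right shape: starting from $\tm$, reduce $\tmtwo\to_{\CalcV}^*\eval\tmtwo$ and each $\tmtwo_k,\tmthree_j$ to its evaluation inside the binders (congruence), reaching a term to which $\mapsto_{\CalcV}$ applies, yielding e.g. $\eval{\tmtwo_i}\subst\vartwo{\lambda\vec\var.\,\tm'}$, which then $\to_{\IPC}^*$-reduces to its $\evali\cdot$.

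The main obstacle is the third claim in the Visser case: one has to be slightly careful that the \CalcV{} reduction rule as written matches the term after the inner evaluations, and in particular that the context $\Ctx$ appearing in $\eval\tmtwo=\Ctx\angled{\EFQ{\tm'}}$ or $\Ctx\angled{\var_j\,\tm'}$ is a legitimate weak head \IPC{} context so that the rule is indeed applicable (this is exactly what the Classification Lemma guarantees). A secondary subtlety is bookkeeping with the binders $\vec\var$: they are bound in $\tmtwo$ but free-for-abstraction when we form $\lambda\vec\var.\,\tm'$, and one must ensure the typing context manipulations (stripping $\Gamma_\to$, re-abstracting) are sound — but this is precisely the content of inversion plus the standard abstraction/substitution lemmas for \IPC{}, which we may assume. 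Everything else is a mechanical congruence-and-substitution argument.
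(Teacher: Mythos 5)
Your proposal is correct and follows essentially the same route as the paper: mutual induction over the typing derivation (equivalently, the term structure), using type preservation and normality of $\evali\cdot$ for points (1)--(2), the inclusion $\toIPC\subseteq\toV$ plus congruence for point (3), and \reflemma{classification-V} to justify that the three cases of $\eval\cdot$ on Visser terms are exhaustive. The only cosmetic difference is that the paper records the exhaustiveness argument in the note accompanying the definition of $\eval\cdot$ and leaves the per-case details (typing of $\lambda\vec\var.\,\tm'$, firing of the Visser rules after inner evaluation) as a brief sketch, which you have simply spelled out.
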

\begin{proof}
  The three points can be proved mutually, by 
  induction on the type derivation $\Gamma\vdash_{\text{\CalcV}} \tm\colon\type$:
  \begin{enumerate}
    \item follows by \ih{} and by subject reduction for \IPC{};
    \item follows by \ih{} and from the fact that the output of $\evali\cdot$ are only normal forms;
    \item follows by \ih{} and from the fact that \IPC{} is a subcalculus of \CalcV{}.
  \end{enumerate}
\end{proof}

It easily follows:
\CopyTheorem{Normalization for \CalcV{}}{n-v}{%
  \CalcV{} enjoys the normalization property.%
}


\section{Theorems on \KP{}}%
\CopyTheorem{Subject reduction for \KP{}}{s-r-kp}{
  If $\Gamma \dashKP{} \tm\colon\type$ and  $\tm \toKP{} \tmtwo$, then $\Gamma \dashKP{} \tmtwo \colon \type$.
}
\begin{proof}
 By the definition of reduction as the closure of $\mapstoKP$ under evaluation contexts, we just prove the statement when $\tm\mapstoKP \tmtwo$; the general case $\tm\toKP\tmtwo$ follows because substitution preserves types.

 The cases of the usual intuitionistic reductions are standard (see for example \cite{sorensen_lectures_2006}); we just prove the cases of the reduction rules associated with $\harr$.

 For the case of the left injection $\harrop {\Bind\var\In 1 \tm} {\Bind \vartwo \tmtwo_1} {\Bind\vartwo\tmtwo_2} \mapsto \tmtwo_1\subst\vartwo{\lambda\var.\,\tm}$, by inversion we have $\Gamma, \vartwo\colon \neg\typetwo \to \type_1 \vdash \tmtwo_1 \colon \typefour$ and  $\Gamma, \var\colon \neg\typetwo \vdash  \In 1 \tm \colon \type_1 \lor \type_2$ for some $\type_1,\type_2, \typetwo, \typefour$.
 Again by inversion \\$\Gamma, \var\colon \neg\typetwo \vdash \var\colon \type_1$, and by $\to_I$ we obtain $\Gamma \vdash \lambda\var.\,\tm \colon \neg\typetwo \to \type_1$.
 By substitutivity we get the desired result $\Gamma \vdash \tmtwo_1\subst\vartwo{\lambda\var.\,\tm} \colon \typefour$.
The case of the right injection is analogous.

Finally, if $\harrop {\Bind\var \Ctx\angled{\EFQ\tm}} {\Bind\vartwo\tmtwo_1} {\Bind\vartwo\tmtwo_2} \mapsto \tmtwo_1\subst\vartwo{\lambda\var.\,\EFQ\tm}$, by inversion we have $\Gamma, \vartwo\colon \neg\typetwo \to \type_1 \vdash \tmtwo_1 \colon \typefour$ and $\Gamma, \var\colon \neg\typetwo \vdash  \Ctx\angled{\EFQ\tm} \colon \type_1\lor\type_2$ for some $\type_1, \type_2, \typetwo, \typefour$. It is easy to see, by induction on the definition of weak head contexts and by inversion, that $\Gamma, \var\colon \neg\typetwo \vdash \tm \colon \bot$; by $\bot_E$ we obtain $\Gamma, \var\colon \neg\typetwo \vdash \EFQ\tm \colon \type_1$. By $\to_I$ we obtain $\Gamma \vdash \lambda\var.\, \EFQ\tm \colon \neg\typetwo \to \type_1$, and by substitutivity we get the desired result $\Gamma \vdash \tmtwo_1\subst\vartwo{\lambda\var.\, \EFQ\tm} \colon \typefour$.
\end{proof}

We say that a term is \emph{\negneutral} if it has the form $\Ctx\angled{\EFQ\tm}$.
\CopyLemma{Classification for \KP{}}{classification}{
Let $\Gamma_\neg \dashKP{} t\colon \type$ for $\tm$ in (weak head) normal form and $\tm$ not \negneutral:
\begin{itemize}
 \item \emph{Implication}: if $\type= \typetwo\to\typethree$, then $\tm$ is an abstraction or a variable in $\Gamma_\neg$;
 \item \emph{Disjunction}: if $\type= \typetwo\lor\typethree$, then $\tm$ is an injection;
 \item \emph{Conjunction}: if $\type= \typetwo\land\typethree$, then $\tm$ is a pair;
 \item \emph{Falsity}:  if $\type= \bot$, then $\tm = \var\,\tmtwo$ for some $\tmtwo$ and some $\var\in\Gamma_\neg$.
\end{itemize}
}
\begin{proof}
 By induction on the type derivation of $\tm$:
 \begin{itemize}
  \item ($\Ax$) and $\tm$ is a variable in $\Gamma_\neg$: by definition of $\Gamma_\neg$, the type of $\tm$ is an implication, and we conclude.
  \item ($\to_I$) and $\tm$ is an abstraction: trivial.
  \item ($\to_E$) and $\tm=\tmtwo\,\tmthree$ with $\Gamma_\neg \vdash \tmtwo\colon \typetwo\to\type$. Because $\tm$ is in normal form, $\tmtwo$ cannot be an abstraction. By \ih{}, $\tmtwo$ is either a variable in $\Gamma_\neg$ or a \negneutral{} term. In the first case, note that we have that $\type=\bot$ and $\tm = x \tmtwo$, and the thesis holds; in the second case, $t$ is \negneutral{} and the thesis holds.
  \item ($\lor_I$) and $\tm$ is an injection: trivial.
  \item ($\lor_E$) and $\tm=\Case \tmtwo - -$ with $\Gamma_\neg\vdash \tmtwo \colon \typefour_1\lor\typefour_2$.
  By \ih{} $\tmtwo$ is either an injection or \negneutral{}.
  The first case is not possible because $\tm$ is in normal form; in the second case, $\tm$ is \negneutral{} as required.
  \item ($\land_I$) and $\tm$ is a pair: trivial.
  \item ($\land_E$) and $\tm=\Proj i \tmtwo$ with $\Gamma_\neg\vdash \tmtwo \colon \typefour_1\land\typefour_2$. By \ih{} $\tmtwo$ is either a pair or \negneutral{}, but the first case contradicts the hypothesis that $\tm$ is in normal form. Therefore $\tmtwo$ is \negneutral{}, and also $\tm$ is \negneutral{}.
  \item ($\bot_E$) then $\tm$ is immediately \negneutral{}.
  \item (Harrop) not possible. Assume $\tm=\harrop {\Bind\var\tmtwo} - -$ with $\Gamma_\neg, \var\colon\neg\typetwo\vdash \tmtwo\colon \typefour_1\lor\typefour_2$, and derive a contradiction. By \ih{} $\tmtwo$ is an injection or a \negneutral{} term, but both cases contradict the hypothesis that $\tm$ is in normal form.
 \end{itemize}
\end{proof}


\subsection{Strong Normalization}
In this section, we prove the strong normalization property for \KP{}
by means of an adapted version of Girard's method of candidates \cite{girard_proofs_1989}.

\begin{definition}[Weak head \KP{} contexts]
  \[ \elctx ::= \Box \mid \elctx\,\tmtwo \mid \Proj i \elctx \mid 
  \Case \elctx {\Bind\vartwo{\tmtwo_1}} {\Bind\vartwo{\tmtwo_2}} \mid %
  \harrop {\Bind\var\elctx} {\Bind\vartwo\tmtwo_1} {\Bind\vartwo\tmtwo_2} \]
\end{definition}

Let $\SN$ be the set of \emph{strongly normalizing terms} of \KP{}.
By abuse of notation, we say that a context -- be it an \IPC{} context $\Ctx$ or a \KP{} context $\elctx$ -- is strongly normalizing if all its ``internal'' $\lambda$-terms are strongly normalizing.

\begin{definition}[Weak head reduction $\toe$, $\toenf$]\label{def:toenf}
  We define $\toe$ as the ``strongly normalizing'' closure of $\mapstoKP{}$ (Figure~\ref{fig:reduction-rules}) under weak head contexts:
    \[\begin{array}{lcl}
    \elctx\angled{(\lambda\var.\, \tm)\, \tmtwo} & \toe & \elctx\angled{\tm\subst\var\tmtwo} 
    \\
    \elctx\angled{\Proj i {\Pair{\tmtwo_1}{\tmtwo_2}}} & \toe & \elctx\angled{\tmtwo_i} \\ \elctx\angled{\Case{\In{i}\tm}{\Bind\vartwo\tmtwo_1}{\Bind\vartwo\tmtwo_2}} & \toe & \elctx\angled{\tmtwo_i\subst\vartwo\tm} \\
    \elctx\angled{\harrop {\Bind\var\In{i}\tm} {\Bind\vartwo\tmtwo_1} {\Bind\vartwo\tmtwo_2}} & \toe & \elctx\angled{\tmtwo_i\subst\vartwo{\lambda\var.\,\tm}} \\
    \elctx\angled{\harrop {\Bind\var\Ctx\angled{\EFQ\tm}} {\Bind\vartwo\tmtwo_1} {\Bind\vartwo\tmtwo_2}} & \toe & \elctx\angled{\tmtwo_1\subst\vartwo{(\lambda\var.\,\EFQ\tm)}} \\
    \end{array}\]
  for every \SN{} contexts $\Ctx,\elctx$ and $\tm,\tmtwo,\tmtwo_1,\tmtwo_2\in\SN$.
  As usual, we denote by $\toe^*$ the reflexive and transitive closure of $\toe$.
  A term $\tm$ is a $\toe$-normal form (in short, \Enf) if $\tm\not\toe$.
  We say that $\tm\toenf\tmtwo$ if $\tm\toe^*\tmtwo$ and $\tmtwo$ is a \Enf.
\end{definition}

By inspection of the reduction rules, one may prove:
\begin{lemma}
  $\toe$ is deterministic.
\end{lemma}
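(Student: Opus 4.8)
The plan is to reduce determinism of $\toe$ (i.e.\ every term has at most one $\toe$-reduct) to a \emph{unique decomposition} property: every term can be written in at most one way as $\elctx\angled{r}$ with $\elctx$ a weak head \KP{} context and $r$ a redex (one of the five left-hand sides in \refdef{toenf}), and moreover in the Harrop-efq case the splitting $b = \Ctx\angled{\EFQ\tm}$ of the Harrop body $b$ is itself unique. Granting this, if $\tm\toe\tmtwo$ and $\tm\toe\tmthree$ then both steps act on the same redex at the same position; since the five redex shapes are pairwise syntactically disjoint — the only conceivable clash, a Harrop term that is simultaneously a Harrop-inj and a Harrop-efq redex, is ruled out because an injection is never of the form $\Ctx\angled{\EFQ\tm}$ — and $\mapstoKP$ yields a single contractum for each redex (for Harrop-efq this relies on the uniqueness of $\Ctx$ and $\tm$), we conclude $\tmtwo = \tmthree$. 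The side conditions ``all internal $\lambda$-terms are in $\SN$'' are irrelevant here: they only prune which decompositions induce a $\toe$-step, and so cannot spoil determinism.

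Unique decomposition is proved by structural induction on $\tm$, following its \emph{head spine}. The grammar of $\elctx$ has no constructor for abstraction, pair, injection, or exfalso, so when $\tm$ has one of those shapes (or is a variable) it is a $\toe$-normal form; and when $\tm$ is $\tm_1\,\tm_2$, $\Proj i \tm_1$, $\Case{\tm_1}{\cdots}{\cdots}$ or $\harrop{\Bind\var\tm_1}{\cdots}{\cdots}$, any context $\elctx$ enclosing the redex either has its hole at the root or descends into the indicated subterm $\tm_1$ — the remaining immediate subterms ($\tm_2$, the case branches, the Harrop branches) are simply unreachable by $\elctx$. The only genuine branching is whether, when the head node is itself a redex, the hole could instead be placed deeper; it cannot, because descending past such a node lands on $\tm_1$, which is an abstraction (for a $\beta$-redex), a pair (for a projection redex), or an injection (for a case-analysis or a Harrop-inj redex) — in every case a $\toe$-normal form, hence containing no redex. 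So when the head node is a root redex the decomposition is forced; otherwise we descend into $\tm_1$ and invoke the induction hypothesis on the smaller term $\tm_1$.

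The delicate point is the Harrop term $\harrop{\Bind\var b}{\cdots}{\cdots}$, whose body $b$ may trigger a root redex in two ways, $b = \In i \tm$ (Harrop-inj) or $b = \Ctx\angled{\EFQ\tm}$ (Harrop-efq). These two are mutually exclusive — an injection is not of the form $\Ctx\angled{\EFQ\tm}$, since $\Ctx$ contexts descend only into the function of an application, the argument of a projection, or the scrutinee of a case — and each of them also rules out ``$b$ is itself $\toe$-reducible'': an injection is already normal, and for $b = \Ctx\angled{\EFQ\tm}$ a short induction on $\Ctx$ — using crucially that $\Ctx$ has no constructor for $\harr$, $\lambda$, pairs or injections — shows that $b$ has no weak head redex at all (its head spine is a chain of applications, projections and cases ending at $\EFQ\tm$, and none of those positions is a redex). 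The same head-spine observation shows that the splitting $b = \Ctx\angled{\EFQ\tm}$ is unique, the hole of $\Ctx$ sitting at the first exfalso node along the head spine of $b$. With these facts the Harrop case of the induction closes just like the others; establishing them — in particular the auxiliary claim that $\Ctx\angled{\EFQ\tm}$ is weak-head-irreducible — is the only real obstacle in the argument.
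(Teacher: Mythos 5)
Your proof is correct, and it matches the paper's intent: the paper gives no actual argument for this lemma (it is stated as provable ``by inspection of the reduction rules''), and your unique-decomposition argument is precisely that inspection carried out in full. You isolate the only genuinely delicate points correctly --- that a Harrop body cannot simultaneously be an injection and of the form $\Ctx\angled{\EFQ\tm}$, that injections, abstractions, pairs and terms $\Ctx\angled{\EFQ\tm}$ are weak-head irreducible because neither context grammar descends into them, and that the splitting of the body as $\Ctx\angled{\EFQ\tm}$ is unique since the hole must sit at the (unique) exfalso node terminating the head spine --- and you are right that the $\SN$ side conditions only restrict which decompositions fire a step and so cannot affect determinism.
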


One of the main properties of reducibility candidates is that
they are \emph{backward closed} under reduction:
\begin{definition}[Backward closure $\Clos\cdot$]\label{def:closure}
  Let $T$ be a set of \Enf{s}.
 We define its \emph{closure under backward weak head reduction} as the set
 $\Clos{T} \defeq \{ \tmtwo \mid \tmtwo \toenf \tm \in T \}$.
\end{definition}

\begin{lemma}[Backward closure of $\SN$]\label{l:prec}
  $\SN$ is \emph{backward closed} under $\toe$.
\end{lemma}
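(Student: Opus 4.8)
The plan is to show that whenever $\tmtwo \toenf \tm$ with $\tm \in \SN$, then $\tmtwo \in \SN$; equivalently, if $\tmtwo \toe^* \tm$ and $\tm$ is strongly normalizing, then $\tmtwo$ is strongly normalizing. By transitivity it suffices to treat a single step $\tmtwo \toe \tm$: from the single-step statement, a straightforward induction on the length of the $\toe^*$-sequence gives the general case. So fix $\tmtwo \toe \tm$ with $\tm \in \SN$, and suppose towards a contradiction that $\tmtwo$ admits an infinite $\toKP{}$-reduction sequence.

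The key observation is that $\toe$ is the \emph{weak head} closure of $\mapstoKP{}$ under \SN{} contexts, and by the previous lemma it is deterministic. So from $\tmtwo$ there is exactly one $\toe$-redex, sitting at the head inside some \SN{} context $\elctx$, with all the side-terms ($\tmtwo_1,\tmtwo_2$, the argument of an application, etc.) already in $\SN$ by the side-conditions in \refdef{toenf}. The plan is to analyse an arbitrary infinite $\toKP{}$-sequence out of $\tmtwo$ and show that it must eventually fire the head redex — and more precisely that it can be rearranged (by a standard ``residuals / postponement'' argument, or directly by induction on the structure of $\elctx$ together with the fact that the side-terms are strongly normalizing) so that the head $\toe$-step is performed, yielding $\tm$, after finitely many steps internal to the side-terms and the context. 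Since those internal reductions take place in strongly normalizing subterms, only finitely many can occur before the head redex is reduced; after reducing it we reach a term reachable from $\tm$ by $\toKP{}$, hence strongly normalizing, contradicting the assumed infinite sequence. I would organize the case analysis along the shape of $\elctx$ (hole, application $\elctx'\,\tmthree$, projection, case, or $\harr$ context) and, for the innermost redex, along which of the five $\mapstoKP{}$ rules is being fired, checking in each case that the substituted term is built from strongly normalizing pieces — using in particular that $\SN$ is closed under substitution of strongly normalizing terms for variables, which is itself routine.

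The main obstacle is the bookkeeping for the two $\harr$-reduction rules: \emph{Harrop-inj} substitutes $\lambda\var.\,\tm$ and \emph{Harrop-efq} substitutes $\lambda\var.\,\EFQ\tm$, i.e.\ they place a term that was \emph{under a binder} $\Bind{\vec\var}{\cdot}$ (inside an \IPC{} context $\Ctx$, in the efq case) into the body of $\tmtwo_1$. One must check that the strong normalization of the pieces of the \SN{} $\harr$-context really does entail strong normalization of these newly formed abstractions, and that reductions that were taking place inside $\Ctx\angled{\EFQ\tm}$ before the head step correspond to reductions inside the substituted $\lambda\var.\,\EFQ\tm$ afterwards — so no infinite internal sequence can be ``hidden'' by the head reduction. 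This is precisely the ``reductions under binders require special treatment'' point flagged before \refthm{kp-sn}; once that commutation is pinned down, the contradiction closes and the lemma follows.
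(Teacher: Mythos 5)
Your plan is essentially the paper's proof: given $\tmtwo\toe\tm$ with $\tm\in\SN$, consider an arbitrary reduction sequence from $\tmtwo$; either it stays internal to the (\SN{}) context and side-terms forever, which is impossible since those pieces are strongly normalizing by the side-conditions of \refdef{toenf}, or it eventually fires the unique head redex, and the term obtained at that point is a $\toKP{}$-reduct of $\tm$ (internal steps performed before the head step reappear as reductions of the corresponding subterms of $\tm$), hence strongly normalizing — exactly the dichotomy the paper uses, detailed there for \emph{Harrop-inj}. Two caveats: the residuals/postponement rearrangement you offer as one option is unnecessary machinery, since the direct dichotomy already works; and you should drop the auxiliary claim that $\SN$ is closed under substitution of strongly normalizing terms for variables — as stated it is false in general ($\var\,\var$ and $\lambda\vartwo.\,\vartwo\,\vartwo$ are both \SN{}, their substitution instance is not), and in typed form it is essentially the strong normalization theorem one is trying to prove. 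Fortunately your argument never needs it: the only facts required are the routine compatibility of reduction with substitution, plus the observation that the post-head-step term is a reduct of the \SN{} term $\tm$.
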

\begin{proof}
  Let $\tm\in\SN$ and
  $\tmtwo\toe\tm$; we need show that $\tmtwo\in\SN$.
  By cases on the reduction rules of \refdef{toenf}; we only consider the case of \emph{Harrop-inj}, as one can proceed in a similar way for the other reduction rules.
  Let $ \tmtwo = \elctx\angled{\harrop {\Bind\var\In{i}{\tm'}} {\Bind\vartwo\tmtwo_1'} {\Bind\vartwo\tmtwo_2'}} \toe \elctx\angled{\tmtwo_i'\subst\vartwo{\lambda\var.\,\tm'}} = \tm $, and let us consider a reduction sequence beginning with $\tmtwo$.
  Either the sequence terminates after some internal redutions
  \[ \tmtwo \to^* \elctx'\angled{\harrop {\Bind\var\In{i}{\tm''}} {\Bind\vartwo\tmtwo_1''} {\Bind\vartwo\tmtwo_2''}} \]
  which must terminate because all internal terms are $\SN$ by definition of $\toe$, or eventually we have 
  \[%
  \elctx'\angled{\harrop {\Bind\var\In{i}{\tm''}} {\Bind\vartwo\tmtwo_1''} {\Bind\vartwo\tmtwo_2''}}
  \to
  \elctx'\angled{\tmtwo_i''\subst\vartwo{\lambda\var.\,\tm''}}.
  \]
  This term is strongly normalizing because it is a reduct of $\tm$, and by hypothesis $\tm\in\SN$. Therefore the reduction sequence must terminate.
\end{proof}

Another key notion are \emph{neutral terms},
that are intuitively \Enf{s} that do not begin with constructors:
\begin{definition}[Neutral terms]
  $ \SNe \defeq \{\elctx\angled\var \mid \elctx \text{ is } \SN  \text{ and } \var \text{ a variable}\} \cup \{\Ctx\angled{\EFQ\tm} \mid \Ctx\text{ and }\tm \text{ are } \SN\} $.
\end{definition}
\begin{fact}\label{fact:NeSN}
  Neutral terms are strongly normalizing \Enf{s}.
\end{fact}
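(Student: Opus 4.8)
The plan is to treat the two disjuncts in the definition of $\SNe$ separately, and for each to show both that it is a \Enf{} and that it lies in $\SN$. The unifying observation is that neutral terms have a \emph{frozen head}: after peeling off the weak head context, $\elctx\angled\var$ exposes the variable $\var$ at the bottom of its spine, and $\Ctx\angled{\EFQ\tm}$ exposes the exfalso term $\EFQ\tm$; neither a variable nor an exfalso term is ever the outermost symbol of the left-hand side of a reduction rule, since \emph{Beta}, \emph{Projection}, \emph{Case}, \emph{Harrop-inj} and \emph{Harrop-efq} all require a $\lambda$, a pair, or an injection in the active position. Consequently no redex can ever be created at the interface between a neutral term's context and its hole.

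For the \Enf{} claim, I would observe that writing a neutral term as $\elctx'\angled{r}$ along a weak head context $\elctx'$ always leaves $r$ with a variable or an exfalso term at the head of its spine, so $r$ is not a redex; hence no $\toe$-step applies (the $\SN$ side-conditions of \refdef{toenf} only make this direction easier). For strong normalization I would induct on the structure of the context. For $\Ctx\angled{\EFQ\tm}$ with $\Ctx,\tm\in\SN$: the base case $\Ctx=\CtxHole$ amounts to $\EFQ\tm\in\SN$, which holds because no reduction rule has an exfalso term as left-hand side, so every reduction of $\EFQ\tm$ happens inside $\tm\in\SN$; in the cases $\Ctx=\Ctx_0\,\tmtwo$, $\Proj i {\Ctx_0}$, $\Case{\Ctx_0}{-}{-}$, the head subterm $\Ctx_0\angled{\EFQ\tm}$ is in $\SN$ by the induction hypothesis and — again by induction — every reduction of it keeps the shape $\Ctx_0'\angled{\EFQ\tm'}$, so it never becomes a $\lambda$, a pair, or an injection; hence no redex is ever created at the head of $\Ctx\angled{\EFQ\tm}$, every reduction step stays strictly inside one of the finitely many $\SN$ subterms ($\Ctx_0\angled{\EFQ\tm}$ together with the argument or branches of the outermost context layer), and a standard finite-interleaving argument (induction on the sum of their maximal reduction lengths) gives $\Ctx\angled{\EFQ\tm}\in\SN$. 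The case $\elctx\angled\var$ with $\elctx\in\SN$ is treated similarly, with $\var$ in the role of the frozen head; its only genuinely new ingredient is discussed next.

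The one place where the argument is least mechanical is the \KP{}-specific context former that carries a $\harr$ at the head: there the head occurrence \emph{could} a priori be a \emph{Harrop-inj} or \emph{Harrop-efq} redex, but it is not, precisely because its principal argument has the form $\elctx_0\angled\var$, which by the frozen-head invariant keeps $\var$ at its head under every reduction and so never reduces to an injection, nor to a term $\Ctx'\angled{\EFQ\tmtwo}$. Apart from this, the work is bookkeeping: the real obstacle is only to formulate and prove the frozen-head invariant cleanly by induction on contexts — the head symbol of a neutral term is invariant under arbitrary reduction, whence the context shape is preserved and no interface redex ever appears — after which the termination bound is routine.
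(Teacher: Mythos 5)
Your proof is correct, and it supplies exactly the argument the paper leaves implicit: the Fact is stated without proof, being regarded as immediate from the definitions, and the intended justification is precisely your frozen-head observation — no reduction rule has a variable or an exfalso term in its active position, so in $\elctx\angled\var$ or $\Ctx\angled{\EFQ\tm}$ no redex can ever appear along the spine, every step stays inside one of the finitely many $\SN$ internal terms, and both the \Enf{} property and strong normalization follow by your interleaving argument. The only point worth spelling out more explicitly is the one you gesture at in the Harrop case: since \emph{Harrop-efq} demands a principal argument of the shape $\Ctx\angled{\EFQ\tmtwo}$ with $\Ctx$ an \IPC{} weak head context, one should check syntactically that a term $\elctx_0\angled\var$ (whose \KP{} spine ends in a variable) can never be written in that form — its \IPC{}-spine head is a variable, application, projection, case or Harrop term, never an exfalso — and that this shape is preserved under reduction; this is exactly your frozen-head invariant, so the remaining work is explicitness rather than a missing idea.
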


We are now ready to define the semantics of formulas:
\begin{definition}[Denotation $\den\cdot$]\label{def:valuation}~
  \begin{enumerate}
    \item\label{p:valuation-atom} $\den{\tatom} \defeq
     \SN$ for every $\tatom$ atomic (also $\tatom=\bot$),
    \item\label{p:valuation-to}
     $\den{\type\to\typetwo} \defeq %
      \Clos{\{\lambda\var.\,\tm \mid \forall \tmtwo\in\den\type,\, \tm\subst\var\tmtwo\in\den\typetwo \}}
      \cup \Clos\SNe $,
    \item\label{p:valuation-and} 
     $\den{\type_1\land\type_2} \defeq %
     \Clos{\{ \Pair {\tm_1} {\tm_2} \mid \tm_i\in\den{\type_i} \}} %
     \cup \Clos\SNe $,
    \item\label{p:valuation-or} 
     $\den{\type_1\lor\type_2} \defeq %
     \Clos{\{\In i \tm \mid \tm \in \den{\type_i} \}} \cup \Clos\SNe $.
  \end{enumerate}
\end{definition}

In fact, we note that our definition produces candidates of reducibility:
\begin{lemma}[Denotations are candidates]\label{l:candidates}
  For every $\type$, its denotation:
  \begin{enumerate}
    \item\label{p:candidates-sn} contains only \emph{strongly normalizing} terms: $\den\type\subseteq\SN$
    \item\label{p:candidates-vars} contains all \emph{neutral terms}: $\SNe \subseteq \den\type$
    \item\label{p:candidates-closure} is \emph{backward closed}: if $\tm\in \den\type$ and $\tmtwo\toe\tm$, then $\tmtwo \in \den\type$.
  \end{enumerate}
\end{lemma}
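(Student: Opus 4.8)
The plan is to prove the three properties simultaneously by induction on the structure of $\type$. First I would record three elementary facts. \emph{(a)} $\Clos\SNe\subseteq\SN$: neutral terms are in $\SN$ by \reffact{NeSN}, and $\SN$ is backward closed under $\toe$ by \reflemma{prec}, so any $\tmtwo$ with $\tmtwo\toenf\tm\in\SNe$ lies in $\SN$. \emph{(b)} $\SN$ is closed under the constructors $\lambda\var.(-)$, $\Pair{-}{-}$ and $\In i{(-)}$, because every reduction of such a term is an internal reduction of an immediate subterm. \emph{(c)} If $T$ is a set of $\toe$-normal forms then $\Clos T$ is backward closed under $\toe$: if $\tmtwo\toe\tm\in\Clos T$, say $\tm\toenf\tm'\in T$, then $\tmtwo\toe\tm\toes\tm'$ with $\tm'$ a $\toe$-normal form, so $\tmtwo\toenf\tm'$ and hence $\tmtwo\in\Clos T$. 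I would also note that in each non-atomic case $\den\type$ has the form $\Clos A\cup\Clos\SNe=\Clos{A\cup\SNe}$, where the elements of $A$ (abstractions, pairs, or injections) and of $\SNe$ are all $\toe$-irreducible, so fact \emph{(c)} applies to $\den\type$ itself.

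Given these, properties \ref{p:candidates-vars} and \ref{p:candidates-closure} come out uniformly. For \ref{p:candidates-vars}: in the atomic case $\den\tatom=\SN\supseteq\SNe$ by \reffact{NeSN}, and otherwise $\den\type\supseteq\Clos\SNe\supseteq\SNe$ since neutral terms are $\toe$-normal forms. For \ref{p:candidates-closure}: the atomic case is exactly \reflemma{prec}, and the non-atomic cases are fact \emph{(c)} applied to $\den\type$.

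The substance is in \ref{p:candidates-sn}, i.e. $\den\type\subseteq\SN$. The atomic case is trivial. For $\type=\type_1\land\type_2$ and $\type=\type_1\lor\type_2$: by the induction hypothesis $\den{\type_i}\subseteq\SN$, so by fact \emph{(b)} all pairs $\Pair{\tm_1}{\tm_2}$ with $\tm_i\in\den{\type_i}$ — respectively all injections $\In i\tm$ with $\tm\in\den{\type_i}$ — are strongly normalizing; together with fact \emph{(a)} and the backward closure of $\SN$ this gives $\den\type\subseteq\SN$. The one delicate case is $\type=\typetwo\to\typethree$. It suffices to show that every $\lambda\var.\,\tm$ such that $\tm\subst\var\tmtwo\in\den\typethree$ for all $\tmtwo\in\den\typetwo$ is in $\SN$. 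Here I would use the induction hypothesis for \ref{p:candidates-vars} on $\typetwo$: the variable $\var$ is neutral ($\var=\CtxHole\angled\var$ with $\CtxHole\in\SN$), hence $\var\in\SNe\subseteq\den\typetwo$; instantiating $\tmtwo\defeq\var$ gives $\tm=\tm\subst\var\var\in\den\typethree$, which by the induction hypothesis for \ref{p:candidates-sn} on $\typethree$ is contained in $\SN$. So $\tm\in\SN$, whence $\lambda\var.\,\tm\in\SN$ by fact \emph{(b)}, and again fact \emph{(a)} together with \reflemma{prec} yields $\den{\typetwo\to\typethree}\subseteq\SN$.

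I expect this last step of the implication case to be the only real obstacle: to prove that an abstraction in $\den{\typetwo\to\typethree}$ strongly normalizes one has to ``enter the body'' by supplying a variable as argument, and this is sound precisely because the denotation of the domain already contains all neutral terms — which is the property \ref{p:candidates-vars} being proved in parallel. Everything else is bookkeeping about the shape of $\den\cdot$ and the closure properties of $\SN$.
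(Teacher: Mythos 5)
Your proof is correct and follows essentially the same route as the paper's: points 2 and 3 are immediate from the shape of the denotations, and point 1 is proved by structural induction on the type, using that $\Clos{T}$ preserves strong normalization and, in the implication case, instantiating the body with a (neutral) variable of the domain type. The paper states this argument tersely (``use \reffact{NeSN}, the \ih{} and \reflemma{prec}''), so your facts (a)--(c) and the explicit variable-plugging step are precisely the details it leaves implicit.
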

\begin{proof}
  Points \ref{p:candidates-vars} and \ref{p:candidates-closure} are trivial.
  Before proving Point~\ref{p:candidates-sn} we note that as shown in the proof of \reflemma{prec}, if $T$ contains only strongly normalizing terms, then $\Clos T$ does too. 
  We can then prove Point~\ref{p:candidates-sn} by induction on the structure of types:
  the case of propositional atoms follows from \refdefp{valuation}{atom} and \reflemma{prec};
  for the inductive cases, use \reffact{NeSN}, the \ih{} and \reflemma{prec}.
\end{proof}

We extend the definition of valuation to typing contexts:
\begin{definition}
  Let $\Gamma$ be a typing context;
  we define $\den\Gamma$ as the set of substitutions mapping variables in $\Gamma$ to terms in the denotation of the corresponding type, \ie{}
  \[ \den\Gamma \defeq \{ \sigma \text{ substitution} \mid %
  \dom\sigma = \dom\Gamma \text{ and }
  (\var\mapsto\tm)\in\sigma \text{ implies }%
  \tm\in\den{\Gamma(\var)}\} \]
  where $\Gamma(\var)\defeq \type$ when $(\var\colon\type)\in\Gamma$. 
\end{definition}

A lemma useful in the proof of \reflemma{aux-sn}:
\begin{lemma}\label{l:toes-subst}
  If $\tm\toe\tmtwo$ and $\tm\sigma\in\SN$, then
  $\tm\sigma\toe\tmtwo\sigma$.
\end{lemma}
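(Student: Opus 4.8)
The plan is to unfold \refdef{toenf}, reason by cases on which base rule of $\mapstoKP{}$ is contracted in the step $\tm\toe\tmtwo$, and observe that each such step is preserved by applying $\sigma$ for purely syntactic reasons (the substitution lemma), using the hypothesis $\tm\sigma\in\SN{}$ only to re-establish the strong-normalization side conditions that decorate the definition of $\toe$.

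Concretely, $\tm\toe\tmtwo$ means, by \refdef{toenf}, that there are a weak head \KP{} context $\elctx$ all of whose internal $\lambda$-terms are in $\SN{}$, and a redex $\rho$ fired by one of the five base rules (Beta, Projection, Case, Harrop-inj, Harrop-efq), with all the subterms of $\rho$ that \refdef{toenf} requires to be strongly normalizing indeed in $\SN{}$, such that $\tm=\elctx\angled\rho$, $\tmtwo=\elctx\angled{\rho'}$ and $\rho\mapstoKP\rho'$. Picking, as we may since $\toe$ is defined on $\alpha$-equivalence classes, a representative of $\tm$ whose bound variables avoid $\dom\sigma$ and the free variables occurring in the range of $\sigma$, substitution commutes with this decomposition: $\tm\sigma=\elctx\sigma\angled{\rho\sigma}$, where $\elctx\sigma$ is again a weak head \KP{} context --- the grammar of \KP{} contexts, and likewise that of the \IPC{} contexts $\Ctx$ appearing in the Harrop-efq pattern, is closed under substitution into its non-hole subterms. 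One then checks, rule by rule, that $\rho\sigma$ is again a redex of the same shape with $\rho\sigma\mapstoKP\rho'\sigma$: for Projection this is immediate, for Beta, Case and Harrop-inj it is the substitution lemma $(\tmthree\subst\var{\tmthree'})\sigma=(\tmthree\sigma)\subst\var{\tmthree'\sigma}$ (applicable by the variable convention just adopted), and for Harrop-efq one moreover uses that $\EFQ{(\tmthree\sigma)}=(\EFQ\tmthree)\sigma$ still sits at the bottom of the \IPC{} context $\Ctx\sigma$.

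It then remains to discharge the $\SN{}$ side conditions that \refdef{toenf} demands of the candidate step $\tm\sigma\toe\tmtwo\sigma$, namely that the internal $\lambda$-terms of $\elctx\sigma$ --- and of $\Ctx\sigma$, in the Harrop-efq case --- together with the relevant subterms of $\rho\sigma$ are all in $\SN{}$. This is exactly the point at which the hypothesis is used: every one of these terms is a subterm of $\tm\sigma=\elctx\sigma\angled{\rho\sigma}$, and $\SN{}$ is closed under taking subterms (a reduction inside a subterm lifts, via the congruence $\toKP{}$, to a reduction of the whole term), so $\tm\sigma\in\SN{}$ forces all of them into $\SN{}$. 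Hence $\tm\sigma\toe\tmtwo\sigma$ is a legal instance of the reduction, which is the claim (and, by determinism of $\toe$, it is moreover the unique $\toe$-reduct of $\tm\sigma$). I expect the only delicate part to be this side-condition bookkeeping: the hypothesis $\tm\sigma\in\SN{}$ is genuinely needed, since a substitution could otherwise plant a non-normalizing term in a ``dead'' branch of a \texttt{case} or $\harr$ term and break a side condition while leaving the contracted redex untouched; the remaining points --- the variable convention and the stability of the two context grammars under substitution --- are routine.
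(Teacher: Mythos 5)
Your proof is correct and follows essentially the same route as the paper's: decompose $\tm$ as an $\SN$ weak head context filled with a $\mapstoKP{}$-redex, push $\sigma$ through this decomposition (after renaming bound variables away from $\dom\sigma$ and the free variables of its range), check rule by rule that the contraction commutes with $\sigma$ via the substitution lemma, and invoke $\tm\sigma\in\SN$ to re-establish the strong-normalization side conditions on the internal terms. Your explicit remarks on the stability of the context grammars under substitution and on $\SN$ being closed under subterms merely spell out steps the paper leaves implicit.
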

\begin{proof}
  First note that if $\tm=\elctx\angled{\tm'}$ and $\tm\sigma\in\SN$, then
  $\tm\sigma = \elctx'\angled{\tm'\sigma}$ for some $\SN$ context $\elctx'$.
 Therefore, we assume that $\elctx\angled{\tm'} \toe \elctx\angled{\tmtwo'}$
 with $\tm'\mapsto\tmtwo'$, and we prove that $\tm'\sigma\mapsto\tmtwo'\sigma$
  by cases on the reduction rules:
  \begin{itemize}
    \item $(\lambda\vartwo.\,\tmtwo)\,\tmthree \mapsto \tmtwo\subst\vartwo\tmthree$.
    By renaming, $\vartwo\not\in\fv\sigma,\dom\sigma$.
    Then $((\lambda\vartwo.\,\tmtwo)\,\tmthree)\sigma =
     (\lambda\vartwo.\,\tmtwo\sigma)\,(\tmthree\sigma) \mapsto 
     \tmtwo\sigma\subst\vartwo{\tmthree\sigma}$,
     with $\tmtwo\subst\vartwo\tmthree\sigma =
     \tmtwo\subst\vartwo\tmthree\sigma$.
     We conclude because
     $\tmtwo\sigma\subst\vartwo{\tmthree\sigma}=\tmtwo\subst\vartwo\tmthree\sigma$ and $\tmtwo\sigma,\tmthree\sigma\in\SN$ by the hypothesis that $\tm\sigma\in\SN$.
     \item $\Proj i {\Pair{\tm_1}{\tm_2}} \mapsto \tm_i$.
     Then $(\Proj i {\Pair{\tm_1}{\tm_2}})\sigma = \Proj i {\Pair{\tm_1\sigma}{\tm_2\sigma}}
     \mapsto \tm_i\sigma$, and $\tm_1\sigma,\tm_2\sigma\in\SN$ by hypothesis.
     \item $\Case {\In i \tm} {\Bind\vartwo{\tmtwo_1}} {\Bind\vartwo{\tmtwo_2}} \mapsto \tmtwo_i\subst\vartwo\tm$.
     By renaming, $\vartwo\not\in\fv\sigma,\dom\sigma$. Similar to the case below.
     \item $\harrop {\Bind\var \In i \tm} {\Bind\vartwo{\tmtwo_1}} {\Bind\vartwo{\tmtwo_2}} \mapsto \tmtwo_i\subst\vartwo{\lambda\var.\,\tm}$.
     By renaming, $\var,\vartwo\not\in\fv\sigma,\dom\sigma$.
     \\Then $\harrop {\Bind\var\In i \tm} {\Bind\vartwo{\tmtwo_1}} {\Bind\vartwo{\tmtwo_2}}\sigma = %
     \harrop {\Bind\var\In i (\tm\sigma)} {\Bind\vartwo{\tmtwo_1\sigma}} {\Bind\vartwo{\tmtwo_2\sigma}} 
     \mapsto \tmtwo_i\sigma\subst\vartwo{\lambda\var.\,\tm\sigma} $.
     We conclude because $\tmtwo_i\sigma\subst\vartwo{\lambda\var.\,\tm\sigma} = (\tmtwo_i\subst\vartwo{\lambda\var.\,\tm})\sigma$.
     \item $\harrop {\Bind\var\Ctx\angled{\EFQ\tm}} {\Bind\vartwo{\tmtwo_1}} {\Bind\vartwo{\tmtwo_2}} \mapsto \tmtwo_1\subst\vartwo{\lambda\var.\,\EFQ\tm}$.
      By renaming, $\var,\vartwo\not\in\fv\sigma,\dom\sigma$.
      \\Then $\harrop {\Bind\var\Ctx\angled{\EFQ\tm}} {\Bind\vartwo{\tmtwo_1}} {\Bind\vartwo{\tmtwo_2}}\sigma = \harrop {\Bind\var\Ctx'\angled{\EFQ\tm\sigma}} {\Bind\vartwo{\tmtwo_1\sigma}} {\Bind\vartwo{\tmtwo_2\sigma}}$ for some $\SN$ context $\Ctx'$.
      We have $\harrop {\Bind\var\Ctx'\angled{\EFQ\tm\sigma}} {\Bind\vartwo{\tmtwo_1\sigma}} {\Bind\vartwo{\tmtwo_2\sigma}} %
      \mapsto \tmtwo_1\sigma\subst\vartwo{\lambda\var.\,\EFQ{(\tm\sigma)}}$, and we conclude because $\tmtwo_1\sigma\subst\vartwo{\lambda\var.\,\EFQ{(\tm\sigma)}} = \tmtwo_1\subst\vartwo{\lambda\var.\,\EFQ \tm} \sigma$.
  \end{itemize}
\end{proof}

\begin{lemma}[Fundamental lemma]\label{l:aux-sn}
  If $\Gamma\vdash\tm\colon\type$ and $\sub\in\den\Gamma$,
   then $\tm\sub\in\den\type$.
\end{lemma}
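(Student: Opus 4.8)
The plan is to prove the Fundamental Lemma by induction on the typing derivation $\Gamma\vdash\tm\colon\type$, following the standard reducibility-candidates pattern but taking special care with the two Harrop cases. For each typing rule, I assume a substitution $\sub\in\den\Gamma$ and must show $\tm\sub\in\den\type$. The axiom case is immediate from the definition of $\den\Gamma$. The introduction rules ($\to_I$, $\land_I$, $\lor_I$) are handled by putting the term in the ``constructor'' component of the corresponding denotation: for $\to_I$, given $\lambda\var.\,\tm$ with $\Gamma,\var\colon\typetwo\vdash\tm\colon\typethree$, I check that for any $\tmtwo\in\den\typetwo$ the extended substitution $\sub\cup\{\var\mapsto\tmtwo\}$ lies in $\den{\Gamma,\var\colon\typetwo}$ (using $\den\typetwo\subseteq\SN$ and \reflemma{candidates}) and invoke the \ih{}; I also need $\lambda\var.\,(\tm\sub)\in\SN$, which follows because $\tm\sub\subst\var\vartwo\in\den\typethree\subseteq\SN$ for a fresh variable $\vartwo\in\SNe\subseteq\den\typetwo$. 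The elimination rules ($\to_E$, $\land_E$, $\lor_E$) use backward closure (\reflemmap{candidates}{closure}, i.e.\ \reffact{NeSN} and \reflemma{prec}): by \ih{} the principal subterm lies in a denotation, which by \reflemma{candidates} consists of strongly normalizing terms that are either constructors (possibly after some $\toe$ steps) or in $\Clos\SNe$; in the constructor case one $\toe$-step fires the redex and one concludes by the \ih{} on the branches plus backward closure, and in the $\Clos\SNe$ case the whole term is neutral, hence in every denotation by \reflemmap{candidates}{vars}. The $\bot_E$ case is similar: $\EFQ{\tm\sub}$ is neutral whenever $\tm\sub\in\SN$.

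The heart of the argument is the \textbf{Harrop case}. Suppose $\tm=\harrop {\Bind\var{\tm_0}} {\Bind\vartwo{\tmtwo_1}} {\Bind\vartwo{\tmtwo_2}}$, where $\Gamma,\var\colon\neg\typetwo\vdash\tm_0\colon\type_1\lor\type_2$ and $\Gamma,\vartwo\colon\neg\typetwo\to\type_i\vdash\tmtwo_i\colon\typefour$ for $i=1,2$. Given $\sub\in\den\Gamma$, I first want to extend $\sub$ by $\var\mapsto\vartwo'$ for a fresh $\vartwo'$; since $\vartwo'\in\SNe\subseteq\den{\neg\typetwo}$, the \ih{} on $\tm_0$ gives $\tm_0\sub\in\den{\type_1\lor\type_2}$, and by \reflemma{candidates} this term is strongly normalizing; the same reasoning shows $\tmtwo_i\sub\in\SN$ (plugging a fresh variable for $\vartwo$, using $\den{\neg\typetwo\to\type_i}\supseteq\SNe$). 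Because $\den{\type_1\lor\type_2}=\Clos{\{\In i \tm' \mid \tm'\in\den{\type_i}\}}\cup\Clos\SNe$, the $\toenf$-normal form of $\tm_0\sub$ is either $\In i {\tm'}$ with $\tm'\in\den{\type_i}$, or a neutral term $\Ctx\angled{\EFQ{\tm'}}$. In the first sub-case, the \emph{Harrop-inj} rule applies: $\tm\sub\toe\tmtwo_i\sub\subst\vartwo{\lambda\var.\,\tm'}$, and since $\lambda\var.\,\tm'\in\den{\neg\typetwo\to\type_i}$ (as $\var\colon\neg\typetwo$ is not among the variables $\vartwo$ uses, and for any $\tmthree\in\den{\neg\typetwo}$ we get $\tm'\subst\var\tmthree\in\den{\type_i}$ — note $\tm'$ may mention $\var$, so I actually re-run the \ih{} on $\tm_0$ with an arbitrary $\tmthree\in\den{\neg\typetwo}$ rather than a fresh variable), the \ih{} on $\tmtwo_i$ with the substitution $\sub\cup\{\vartwo\mapsto\lambda\var.\,\tm'\}$ gives the reduct in $\den\typefour$, and backward closure (\reflemmap{candidates}{closure}) gives $\tm\sub\in\den\typefour$. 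In the second sub-case, \emph{Harrop-efq} applies, $\tm\sub\toe\tmtwo_1\sub\subst\vartwo{\lambda\var.\,\EFQ{\tm'}}$, and $\lambda\var.\,\EFQ{\tm'}\in\den{\neg\typetwo\to\type_1}$ because $\EFQ{(\tm'\subst\var\tmthree)}$ is neutral for any $\tmthree\in\den{\neg\typetwo}\subseteq\SN$, hence in $\den{\type_1}$ by \reflemmap{candidates}{vars}; again conclude by \ih{} on $\tmtwo_1$ and backward closure.

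Two technical points deserve attention. First, when $\tm_0\sub$ reduces under $\toe$ before exposing a constructor (e.g.\ $\tm_0\sub\toenf\In i {\tm'}$ after several steps), I need that these weak-head reductions propagate to the enclosing $\harr$-term; this is exactly the content of the $\elctx$-closure in \refdef{toenf} together with \reflemma{toes-subst}, and the fact that all the internal terms $\tmtwo_1\sub,\tmtwo_2\sub$ are already known to be in $\SN$, so $\harrop{\Bind\var\CtxHole}{\Bind\vartwo{\tmtwo_1\sub}}{\Bind\vartwo{\tmtwo_2\sub}}$ is an $\SN$ weak-head context. Second, I must be careful that the reducts $\tmtwo_i\sub\subst\vartwo{\lambda\var.\,\tm'}$ are genuine redex-contractions in the sense of \refdef{toenf}, which requires the contexts and all internal terms to be strongly normalizing — all of which we have established via \reflemma{candidates}. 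The main obstacle is precisely this Harrop case: unlike ordinary $\lor_E$, the substituted value $\lambda\var.\,\tm'$ is a $\lambda$-abstraction built \emph{during} the reduction, so one has to verify it lands in the arrow-denotation $\den{\neg\typetwo\to\type_i}$, which forces re-invoking the induction hypothesis on $\tm_0$ with an \emph{arbitrary} argument in $\den{\neg\typetwo}$ rather than merely a fresh variable — the rest is a routine adaptation of the Girard–Tait argument.
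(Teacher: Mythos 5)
Your proposal follows the same route as the paper's proof: induction on the typing derivation, the standard candidate arguments for the intuitionistic rules, and in the Harrop case a case analysis on the $\toe$-normal form of the scrutinee, with the key move of re-running the induction hypothesis on $\tm_0$ with an \emph{arbitrary} element of $\den{\neg\typetwo}$ and transporting the weak-head reduction along the substitution via \reflemma{toes-subst}.

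There is, however, one genuine gap: your case analysis for Harrop conflates ``neutral'' with the shape $\Ctx\angled{\EFQ{\tm'}}$. By the definition of $\SNe$, the $\toe$-normal form of $\tm_0\sub$ may also be variable-headed, \ie{} of the form $\elctx\angled\varthree$ with $\varthree$ a variable; this really occurs, for instance when $\sub$ is the identity substitution used in the proof of \refthm{kp-sn} and $\tm_0$ is (or is headed by) a hypothesis variable of disjunctive type. In that situation neither \emph{Harrop-inj} nor \emph{Harrop-efq} fires, so the reduction step you invoke in your ``second sub-case'' simply does not exist. The correct treatment (the paper's third bullet) is to observe that $\harrop{\Bind\var{\elctx\angled\varthree}}{\Bind\vartwo{\tmtwo_1\sub}}{\Bind\vartwo{\tmtwo_2\sub}}$ is itself of the form $\elctx'\angled\varthree$, hence neutral, and to conclude by \reflemmap{candidates}{vars} together with \reflemmap{candidates}{closure}. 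Since you already use exactly this neutrality argument for the ordinary elimination rules, the repair is a one-liner, but as written the enumeration is incomplete. A smaller remark: in the Harrop-efq sub-case, neutrality of $(\EFQ{\tm'})\subst\var\tmthree$ needs $\tm'\subst\var\tmthree\in\SN$, which does not follow merely from $\tm'\in\SN$ and $\tmthree\in\SN$; it requires the same \reflemma{toes-subst}/induction-hypothesis argument you spell out for Harrop-inj (the paper is equally terse here, so this is a shared gloss rather than an error specific to you).
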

\begin{proof}
  By induction on the type derivation.
  The base case is the axiom, and instantiated variables belong to the corresponding denotations
  by the definition of $\den\Gamma$.
  Let us now proceed by cases on the rules of inference:
  \begin{itemize}
    \item[($\to_I$)]
     Assume that for all $\sub\in\den{\Gamma,\var\colon\type}$, $\tm\sub\in\den\typetwo$;
     we need to prove that for all $\sub\in\den\Gamma$, $(\lambda\var.\,\tm)\sub\in\den{\type\to\typetwo}$.
     Let $\sub\in\den\Gamma$, and 
     by renaming $\var\not\in\dom\sub\cup\fv\sub$. Then $(\lambda\var.\,\tm)\sub = \lambda\var.\,\tm\sub$.
     By \refdefp{valuation}{to}, $\lambda\var.\,\tm\sub\in\den{\type\to\typetwo}$ iff
     for all $\tmtwo\in\den\type$, $\tm\sub\subst\var\tmtwo\in\den\typetwo$.
     By taking $\sub'\defeq\sub\cup\subst\var\tmtwo$, this follows
     from the \ih{} and from the hypothesis on $\sub$.
    \item[($\to_E$)]
     We need to prove that for all $\sub\in\den{\Gamma}$,
      $(\tm\,\tmtwo)\sub\in\den{\typetwo}$.
      Note that $(\tm\,\tmtwo)\sub = (\tm\sub)(\tmtwo\sub)$.
      By \ih{} $\tm\sub\in\den{\type\to\typetwo}$, and therefore by \refdefp{valuation}{to}, either:
      \begin{itemize}
        \item $\tm\sub\toenf \ntm \in \SNe$: then $(\tm\sub)(\tmtwo\sub) \toes \ntm(\tmtwo\sub)$ since $\tmtwo\sub\in\SN$
        (by \ih{} and \reflemmap{candidates}{sn})
        Note that
        $\ntm(\tmtwo\sub)$ is neutral, and
        we conclude by \reflemmap{candidates}{vars} and \reflemmap{candidates}{closure}.
        \item $\tm\sub\toenf \lambda\vartwo.\,\tmthree$: then $(\tm\sub)(\tmtwo\sub) \toes (\lambda\vartwo.\,\tmthree)\,(\tmtwo\sub) \toe \tmthree\subst\vartwo{\tmtwo\sub} \in \den\typetwo$
         by \refdefp{valuation}{to}. Conclude by \reflemmap{candidates}{closure}.
      \end{itemize}
    \item[($\bot_I$)]
     By the hypothesis, for every $\sub\in\den\Gamma$, $\tm\sub\in\den\bot$.
     We need to prove that $(\EFQ\tm)\sub\in\den\type$.
     By \reflemmap{candidates}{sn} $\tm\sub\in\SN$,
     and since $(\EFQ\tm)\sub = \EFQ{(\tm\sub)}$,
     $(\EFQ\tm)\sub$ is a neutral term.
     Conclude by \reflemmap{candidates}{vars}.
    \item[($\wedge_I$)] Let $\Gamma\vdash\tm_1\colon\type_1$ and 
     $\Gamma\vdash\tm_2\colon\type_2$:
     we need to prove that for every $\sub\in\den\Gamma$, $\Pair{\tm_1}{\tm_2}\sub\in\den{\type_1\land\type_2}$.
     Since $\Pair\tm\tmtwo\sub = \Pair{\tm\sub}{\tmtwo\sub}$, the claim follows
     from \refdefp{valuation}{and} and the \ih{}
     $\tm\sub\in\den\type$ and $\tmtwo\sub\in\den\typetwo$.
    \item[($\wedge_E$)] Let $\Gamma\vdash\tmtwo\colon\type_1\land\type_2$,
    and by \ih{} $\tmtwo\sub\in\den{\type_1\land\type_2}$ for every $\sub\in\den\Gamma$.
     We need to prove that $(\Proj 1 \tmtwo)\sub\in\den{\type_1}$ and $(\Proj 2 \tmtwo)\sub\in\den{\type_2}$
     for every $\sub\in\den\Gamma$. There are two cases:
     \begin{itemize}
       \item $\tmtwo\sub \toenf \ntm \in \SNe$:
        then $\Proj i {(\tmtwo\sub)} \toenf \Proj i \ntm$ and
        we conclude by \reflemmap{candidates}{vars} and \reflemmap{candidates}{closure} because that term is neutral.
       \item $\tmthree\sub \toenf \Pair{\tm_1}{\tm_2}$ for some $\tm_1\in\den{\type_1}$ and $\tm_2\in\den{\type_2}$:
        therefore $\Proj i (\tmthree\sub) \toes \Proj i {\Pair{\tm_1}{\tm_2}}$ \\$\toe \tm_i\in\den{\type_i}$.
        Conclude by \reflemmap{candidates}{closure} since $(\Proj i \tmthree)\sub = \Proj i (\tmthree\sub)$.
     \end{itemize}
    \item[($\vee_I$)]
      We discuss the case of $\In 1{}$; the case of $\In 2 {}$ is symmetric.
      Let $\Gamma\vdash\tm\colon\type$, and by \ih{} $\tm\sub\in\den\type$ for every $\sub\in\den\Gamma$.
      We show that also $(\In 1 \tm)\sub \in\den{\type\lor\typetwo}$ for every $\sub\in\den\Gamma$.
      Note that $(\In 1 \tm)\sub = \In 1 {(\tm\sub)}$, and conclude by
      \ih{} and \refdefp{valuation}{or}.
    \item[($\vee_E$)] This case is just a simplified version of the following argument for the Harrop rule.
    \item[(Harrop)]
    We need to prove that 
     $(\harrop {\Bind\var\tm} {\Bind\vartwo\tmtwo_1} {\Bind\vartwo\tmtwo_2})\sub \in\den\typefour$
     for every $\sub\in\den\Gamma$.
     We first note that $(\harrop {\Bind\var\tm} {\Bind\vartwo\tmtwo_1} {\Bind\vartwo\tmtwo_2})\sub = %
     \harrop {\Bind\var\tm\sub} {\Bind\vartwo\tmtwo_1\sub} {\Bind\vartwo\tmtwo_2\sub}$
     (assuming by renaming that $\var$ and $\vartwo$ do not occur in $\sub$).
     Let $\sub'\defeq\sub\cup\subst\var\var$. $\tm\sub=\tm\sub'$ and by \ih{}
     $\tm\sub' \in \den{\type_1\lor\type_2}$.
     There are three cases:
     \begin{itemize}
       \item $\tm\sub' \toenf \In i \tmthree_i$ for $\tmthree_i\in\den{\type_i}$:
        then also $\harrop {\Bind\var\tm\sub} {\Bind\vartwo\tmtwo_1\sub} {\Bind\vartwo\tmtwo_2\sub}
         \toes \tmtwo_i\sub\subst\vartwo{\lambda\var.\,\tmthree_i}$.
         In order to be able to use the \ih{} we need to show that $\sub\cup\subst\vartwo{\lambda\var.\,\tmthree_i}\in\den{\Gamma,\vartwo\colon\neg\typetwo\to\type_i}$,
         \ie{} that $\lambda\var.\,\tmthree_i \in \den{\neg\typetwo\to\type_i}$,
         that by definition holds iff for every $\tm'\in\den{\neg\typetwo}$, $\tmthree_i\subst\var{\tm'}\in\den{\type_i}$.
         In order to show the latter, take $\sub''\defeq\sub\cup\subst\var{\tm'}$: then by \ih{} $\tm\sub''\in\den{\type_1\lor\type_2}\subseteq\SN$,
         and therefore by \reflemma{toes-subst} $\tm\sigma'' = \tm\sigma'\subst\var{\tm'}\toes \In i {(\tmthree_i\subst\var{\tm'})} \toenf \In i {\tmthree_i'}$ for $\tmthree_i\subst\var{\tm'} \toenf \tmthree_i'$.
         By \refdefp{valuation}{or}
          $\tmthree_i'\in\den{\type_i}$, but also $\tmthree_i\subst\var{\tm'}$ by \reflemmap{candidates}{closure}, and we conclude.
         \item $\tm\sub'\toenf \Ctx\angled{\EFQ\tmthree} \in \SNe$: then also $\harrop {\Bind\var\tm\sub} {\Bind\vartwo\tmtwo_1\sub} {\Bind\vartwo\tmtwo_2\sub}
         \toes \tmtwo_1\sub\subst\vartwo{\lambda\var.\,\EFQ \tmthree}$.
         As above, in order to use the \ih{} and conclude we only need to prove that
         $(\lambda\var.\,\EFQ \tmthree)\in\den{\neg\typetwo\to\type_1}$. By \refdefp{valuation}{to}, this is the case
         if and only if for all $\tmthree'\in\den{\neg\typetwo}$, \\$(\EFQ \tmthree)\subst\var{\tmthree'} \in\den{\type_1}$.
         This is proved similarly as the point above, and it follows by\\ \reflemma{toes-subst} and the definition of inert terms.
       \item $\tm\sub'\toenf \elctx\angled\varthree\in\SNe$:
       we conclude as usual because $\harrop {\Bind\var\elctx\angled\varthree} {\Bind\vartwo\tmtwo_1\sub} {\Bind\vartwo\tmtwo_2\sub}$ is neutral as well.
     \end{itemize}
  \end{itemize}
\end{proof}

\CopyTheorem{Strong normalization of \KP{}}{kp-sn}%
  {If $\Gamma\vdash_{\text{\KP{}}}\tm\colon\type$, then $\tm$ is strongly normalizing.}%
\begin{proof}
  By \reflemma{aux-sn}, $\tm\sub\in\den\type$ for every $\sub\in\den\Gamma$.
  We now take $\sub$ as the \emph{identity substitution}, mapping the variables
  in $\Gamma$ to themselves. Note that this is an allowed substitution since
  variables are neutral terms and therefore are contained in the denotation of
  every proposition (\reflemmap{candidates}{vars}).
  It follows that $\tm=\tm\sub\in\den\type$, and we conclude because $\den\type$ contains only
  $\SN$ terms by \reflemmap{candidates}{sn}.
\end{proof}

\end{document}